\documentclass[11pt,letterpaper]{article}
\usepackage{authblk}
\usepackage{fullpage}
\usepackage[utf8]{inputenc}
\usepackage[english]{babel}
\usepackage{amsmath}
\usepackage{amsfonts}
\usepackage{amssymb}
\usepackage{makeidx}
\usepackage{graphicx}
\usepackage{lmodern}
\usepackage{tikz}
\usetikzlibrary{arrows}
\usetikzlibrary{positioning}
\usepackage{amsthm}
\usepackage[textsize=footnotesize]{todonotes}
\usepackage{bm}
\usepackage{enumitem}
\usepackage{color}

\usepackage{subcaption}
\captionsetup[figure]{font=small,labelfont=small}

\usepackage{tkz-graph}

\usetikzlibrary{shapes.geometric}

\newtheorem{thrm}{Theorem}[section]

\newtheorem{defi}[thrm]{Definition}
\newtheorem{lem}[thrm]{Lemma}
\newtheorem{corollary}[thrm]{Corollary}
\newtheorem{prop}[thrm]{Proposition}

{\theoremstyle{definition}

\newtheorem{remark}[thrm]{Remark}}

\def\G{\mathcal{G}}
\def\S{\mathcal{S}}
\def\R{\mathcal{R}}
\def\V{\mathcal{V}}
\def\E{\mathcal{E}}

\def\sfa{\mathsf{a}}

\def\F{\mathbb{F}}

\def\N{\mathbb{N}}
\def\Z{\mathbb{Z}}

\DeclareMathOperator\mincut{mincut}
\DeclareMathOperator\sign{sign}

\newcommand{\net}[1]{$\triangledown\!_{#1}$}
\newcommand{\netm}[1]{\triangledown\!_{#1}}

\providecommand{\abs}[1]{\left\vert#1\right\vert}
\newcommand{\bigslant}[2]{{\raisebox{.1em}{$#1$}/\raisebox{-.1em}{$#2$}}}

\begin{document}

\title{Multicast Triangular Semilattice Network}
\author[1]{Angelina Grosso, Felice Manganiello, Shiwani Varal, and Emily Zhu}
\date{} 

\maketitle 

\abstract{We investigate the structure of the code graph of a multicast network that has a characteristic shape of an inverted equilateral triangle. We provide a criterion that determines the validity of a receiver placement within the code graph, present invariance properties of the determinants corresponding to receiver placements under symmetries, and provide a complete study of these networks' receivers and required field sizes up to a network of 4 sources. We also improve on various definitions related to code graphs.}

\section{Introduction}
A communication network is a collection of directed links connecting
transmitters, switches, and receivers, whose underlying structure can
be mathematically represented by a directed graph $\mathcal{G} =
(\mathcal{V},\mathcal{E})$ as introduced by Li \emph{et al.} \cite{ACLY00}. Koetter and M\'edard in \cite{KM03} studied the network code design as an algebraic problem that depends on the structure of the underlying graph. They made a connection between a given network information flow problem and an algebraic variety over the closure of a finite field.

In particular, a multicast network is an error-free network with unit-capacity channels represented by a directed acyclic graph and with the communication requirement that every receiver demands the message sent by every source. Treating the messages as elements of some large enough finite field $\F_q$, it is known that linear network coding suffices to transmit the maximal number of messages.

Code graphs condense the information in a choice of edge-disjoint paths of a multicast network based on the coding points, i.e.\ edges which are ``bottlenecks" where messages are combined in linear network coding. Under this framework, linear network coding is reduced to assigning vectors to vertices in the code graph with independence conditions based on receivers. The triangular semilattice networks are then a family of code graphs embedded in the integer lattice restricted to nonnegative coordinates of some maximum 1-norm with edges between adjacent lattice points directed towards the origin.

This paper is organized as follows. In Section $2$, we refer briefly to and improve upon coding points, code graph, and $\mathbb{F}_q$-labeling of a code graph; these are discussed in detail in \cite{anderson}. We then present a result with regards to determinants in the $\F_q$-labeling. In Section $3$, we introduce a type of code graph called the triangular semilattice network. We discuss receiver placements and invariance of the minors corresponding to receiver placements under symmetries. From this general study, we shift to a complete study of triangular semilattice network with up to four sources in section $4$.

\section{Coding Points and Code Graph}

In this work, we represent a multicast network by a
directed acyclic graph $\G=(\V,\E)$ with a set $\S\subset \V$ of
sources, i.e.\ vertices without incoming edges, and a set
$\R\subset \V$ of receivers, i.e.\ vertices without outgoing edges. Each
directed edge is a unit capacity noise-free communication channel over
a finite field $\F_q$. We further assume that the edge mincut between each
source and each receiver is at least one and the overall mincut
between the set of sources and each receiver is at least the number of
sources. Together with the assumption of coordination at source level
and with the requirement that every receiver $R \in \R$ gets the message from
every source $S \in \S$, the network is equivalent to a multicast network as
defined in \cite{ACLY00}.

If $\R$ consists of a single receiver, the communication requirement is satisfied by a routing solution if and only if $\abs{\S} \leq \mincut(\S,R)$ as a result of Menger's Theorem, which states that the edge $\mincut(\S,R)$ is equal to the maximum number of edge-disjoint paths between the source set $\S$ and the receiver $R$ \cite{anderson}. In the case of multiple receivers where $\abs{\S} \leq \min_{R \in \mathcal{R}} \mincut(\S,R)$, Ahlswede \emph{et al.} \cite{ACLY00} first showed that a network coding solution exists; later it was found that a linear network coding solution over a finite field $\F_q$ exists when $q$ is sufficiently large \cite{LYC03}, in particular, $q \geq \abs{\R}$ was found to be sufficient \cite{jaggi05}. Interested readers may also refer to \cite{Ksc11} for a complete algebraic proof showing that $q > \abs{\R}$ is sufficient.

To condense the information about these
receiver requirements, we consider the corresponding code graph of a
multicast network. Anderson \emph{et al.} \cite{anderson} explain coding points of a network as the bottlenecks of the network where the linear combinations occur. More formally:
\begin{defi}
Let $\mathcal{G}$ be the underlying directed acyclic graph of a multicast network and for each $R \in \mathcal{R}$ let $\mathcal{P}_R = \{P_{S,R} \mid S \in \mathcal{S}\}$ be a set of edge-disjoint paths, where $P_{S,R}$ denotes a path from $S$ to $R$. A coding point of $\mathcal{G}$ is an edge $e=(v,v') \in \mathcal{E}$ such that:
\begin{itemize}
\item There are distinct sources $S, S'\in \S$ and distinct receivers
  $R,R'\in \R$ such that $e$ appears in both $P_{S,R} \in \mathcal{P}_R$ and $P_{S',R'} \in \mathcal{P}_{R'}$.
\item The parents of $v$ in $P_{S,R}$ and $P_{S',R'}$ are distinct.
\end{itemize}
\end{defi}
\begin{defi}
A coding-direct path in $\G$ from $v_1 \in \mathcal{V}$ to $v_2 \in \mathcal{V}$ is a path from $v_1$ to $v_2$ that does not pass through any coding point in $\G$, except possibly in the first edge.
\end{defi}

Note that coding points are dependent on the choices of edge disjoint paths to each receiver. With
$\mathcal{G}=(\mathcal{V},\E,\S,\R ,\{\mathcal{P}_R\mid R \in
\mathcal{R}\})$ we denote a multicast network with chosen sets
of edge-disjoint paths from the sources to each receiver. For a given
multicast network, Anderson \emph{et al.} \cite{anderson} define the
code graph as a directed graph with labeled vertices that preserves
the essential information of the network:
\begin{defi}
Let $\mathcal{G}=(\mathcal{V},\E,\S,\R ,\{\mathcal{P}_R\mid R \in
\mathcal{R}\})$ be a multicast network and let $\mathcal{Q}$ be its
set of coding points. Let the code graph $\Gamma = \Gamma(\G)$ be the
vertex-labeled directed acyclic graph constructed as follows:  
\begin{itemize}
\item The vertex set of $\Gamma$ is $\mathcal{S} \cup \mathcal{Q}$. Given a vertex $v$ of $\Gamma$, the corresponding source or coding point in $\G$ is called the $\G$-object of $v$.
\item The edge set of $\Gamma$ is the set of all ordered pairs of vertices of $\Gamma$ such that there is a coding-direct path in $\G$ between the corresponding $\G$-objects.
\item Each vertex $v$ of $\Gamma$ is labeled with a subset $L_v\subseteq\mathcal{R}$. A receiver $R\in \mathcal{R}$ is in $L_v$ if and only if there is a coding-direct path in $\G$ from the $\G$-object of $v$ to $R$.
\end{itemize}
\end{defi}
In general, Anderson \emph{et al.} \cite{anderson} present the following proposition that attempts to outline the properties of a code graph:

\begin{prop} 
	For any code graph $\Gamma=\Gamma(\G)$, we have that:
	\begin{itemize}
	\item $\Gamma$ is an acyclic graph.
	\item every vertex in $\Gamma$ either has in-degree 0, in which case its $\G$-object is a source, or it has in-degree at least 2, in which case its $\G$-object is a coding point.
	\item for each $R\in\R$, the set of vertices $V_R = \{v \in V \mid R \in L_v\}$ has cardinality $\abs{\S}$, and there are $\abs{\S}$ vertex-disjoint paths from the sources to this set corresponding to the original $\abs{\S}$ edge-disjoint paths.
	\end{itemize}
\end{prop}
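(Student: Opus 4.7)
The plan is to establish the three items in order, each time translating the desired structural property of $\Gamma$ back into a property of $\G$ together with the chosen edge-disjoint paths $\{\P_R\}_{R\in\R}$. For acyclicity, the key observation is that every edge of $\Gamma$ arises from a coding-direct---and therefore directed---path in $\G$. I would fix a topological ordering on $\V(\G)$ (available since $\G$ is a DAG) and assign each vertex of $\Gamma$ the topological index of a representative vertex of $\G$: the source itself, or the tail of the coding-point edge. Every edge of $\Gamma$ then strictly increases this index, so $\Gamma$ inherits a topological ordering and is acyclic.

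For the in-degree dichotomy, a source vertex of $\Gamma$ has $\G$-object with no incoming edges in $\G$, so no coding-direct path can end there and its in-degree is $0$. For a coding-point vertex $v$ with $\G$-object $e=(x,y)$, I would appeal to the definition of coding point to obtain two paths $P_{S,R}$ and $P_{S',R'}$ with $S\neq S'$, $R\neq R'$, and distinct parents of $x$. Tracing each path backward from $x$ to the first earlier source or coding point produces two coding-direct paths in $\G$ ending at $e$, hence two incoming edges of $v$ in $\Gamma$. Their origins must differ: coincidence at a common source would force $S=S'$, and coincidence at a common earlier coding point would be excluded by a short case analysis on the two coding-direct tails leading into $x$.

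For the cardinality and vertex-disjoint paths claim, for each $S\in\S$ I would define $v_S\in\V(\Gamma)$ to be the vertex whose $\G$-object is the last coding point along $P_{S,R}$, or $S$ itself if $P_{S,R}$ contains no coding points. The tail of $P_{S,R}$ beyond $v_S$'s $\G$-object is coding-direct, so $v_S\in V_R$, and decomposing $P_{S,R}$ at its coding points yields a directed path in $\Gamma$ from $S$ to $v_S$. If two such paths (for distinct $S,S'$) shared a vertex of $\Gamma$, the associated coding point would lie on both $P_{S,R}$ and $P_{S',R}$, contradicting the edge-disjointness of $\P_R$. In particular $v_S\neq v_{S'}$, these $|\S|$ paths are vertex-disjoint and correspond transparently to the original edge-disjoint paths, and $|V_R|\geq|\S|$.

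The main obstacle I expect is the reverse inclusion $|V_R|\leq|\S|$. Given $v\in V_R$ with a coding-direct path $\pi$ in $\G$ from $v$'s $\G$-object to $R$, the plan is a flow-swap argument: splice $\pi$ onto a prefix of some $P_{S,R}$ that reaches $v$'s $\G$-object, producing a new family of $|\S|$ edge-disjoint paths to $R$ in which $v$ becomes the final coding point before $R$, forcing $v=v_S$. Preservation of edge-disjointness after this splice would invoke the mincut hypothesis $|\S|\leq\mincut(\S,R)$ to reroute the remaining $|\S|-1$ paths around $\pi$, and this bookkeeping is where I expect the most care to be required.
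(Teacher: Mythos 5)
Your plan runs into a problem that is not a matter of technique: the second bullet of this proposition is not provable as stated, and the paper itself does not prove this proposition at all. It is quoted from Anderson \emph{et al.}\ \cite{anderson}, and the text immediately following it exhibits counterexamples: the paper states that ``the condition on the in-degree of a coding point seems to require additional constraints'' and gives Figure~\ref{fig:badcodepath}, where the code-graph construction produces only one edge into the bottom coding point (and a further modification shows that even choosing paths with the minimum number of coding points does not help). Your in-degree argument defers exactly the failing case. You trace $P_{S,R}$ and $P_{S',R'}$ backward from the two distinct parents of $x$ to the first earlier source or coding point, and you assert that ``coincidence at a common earlier coding point would be excluded by a short case analysis.'' It cannot be excluded: in Figure~\ref{fig:badcodepath} the bottom coding point $(E,F)$ is entered through distinct parents $C$ and $D$ along the two chosen paths, yet both backward traces terminate at the same earlier coding point $(A,B)$. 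Since the edge set of $\Gamma$ is a \emph{set} of ordered pairs of vertices, the two coding-direct paths contribute the single edge from $(A,B)$ to $(E,F)$, and that vertex of $\Gamma$ has in-degree $1$. So the step you flagged as a short case analysis is precisely where the statement fails without extra hypotheses.

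There is a second, related issue in your treatment of the third bullet, stemming from how ``coding-direct path'' must be read. The construction, as instantiated in Figure~\ref{fig:badcodepath}, takes coding-direct paths \emph{relative to the chosen families} $\{\mathcal{P}_R\}$: there is no edge in $\Gamma$ from $S_1$ to the bottom coding point even though $S_1 \to C \to E$ followed by $(E,F)$ is a coding-point-free route in $\G$, and the vertex $(A,B)$ carries the empty label even though $A \to B \to C \to R_1$ avoids intermediate coding points. Under this intended reading, your ``main obstacle'' $\abs{V_R} \leq \abs{\S}$ dissolves: $R \in L_v$ forces the $\G$-object of $v$ to be the last coding point (or the source itself) on some $P_{S,R}$, so $V_R$ is exactly your set $\{v_S \mid S \in \S\}$, and no flow-swap, splicing, or mincut-based rerouting is needed; indeed the splice you propose has no reason to preserve edge-disjointness. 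Under the literal ``any path in $\G$'' reading that your splicing argument presumes, the bullet is simply false: in Figure~\ref{fig:badcodepath} the vertex $(A,B)$ would acquire the label $R_1$, giving $\abs{V_{R_1}} = 3 > 2 = \abs{\S}$. Your acyclicity argument and your construction of the $\abs{\S}$ vertex-disjoint paths (with the restrictive reading) are sound, but the in-degree claim is a genuine gap in the statement itself, which this paper flags as requiring additional constraints rather than proving.
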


The networks we consider in this work will satisfy these
properties. Nonetheless, the condition on the in-degree of
a coding point seems to require additional constraints. In
Figure~\ref{fig:badcodepath}, the code graph construction only
produces one edge to the bottom coding point.

\begin{figure}[htbp]
\begin{subfigure}{.25\textwidth}
\begin{center}
\begin{tikzpicture}
\SetGraphUnit{.6}
\SetVertexMath
\tikzstyle{every node}=[font=\scriptsize]
\renewcommand*{\VertexInterMinSize}{10pt}
\Vertex{S_1} \SOEA[NoLabel](S_1){A} \NOEA(A){S_2}
\SetVertexNoLabel
\SO(A){B} \SOWE(B){C} \SOEA(B){D} \SOEA(C){E} \SO(E){F}
\SetVertexLabel
\SOWE(F){R_1} \SOEA(F){R_2}
\SetUpEdge[style={->}]
\Edge(S_1)(A) \Edge(S_2)(A) \Edge[color=black!30,lw=1.5pt](A)(B) \Edge(B)(C) \Edge(B)(D) \Edge(D)(E) \Edge(C)(E)
\Edge[color=black!30,lw=1.5pt](E)(F) \Edge(F)(R_1) \Edge(F)(R_2) \Edge(S_1)(C) \Edge(C)(R_1) \Edge(S_2)(D) \Edge(D)(R_2)
\end{tikzpicture}
\end{center}
\caption{The Network}
\end{subfigure}\begin{subfigure}{.25\textwidth}
\begin{center}
\begin{tikzpicture}
\SetVertexMath
\SetGraphUnit{.6}
\tikzstyle{every node}=[font=\scriptsize]
\renewcommand*{\VertexInterMinSize}{10pt}
\Vertex{S_1} \SOEA[NoLabel](S_1){A} \NOEA(A){S_2}
\SetVertexNoLabel
\SO(A){B} \SOWE(B){C} \SOEA(B){D} \SOEA(C){E} \SO(E){F}
\SetVertexLabel
\SOWE(F){R_1} \SOEA(F){R_2}
\SetUpEdge[style={->}]
\Edge(S_2)(A) \Edge[color=black!30,lw=1.5pt](A)(B) \Edge(B)(D) \Edge(D)(E)
\Edge[color=black!30,lw=1.5pt](E)(F) \Edge(F)(R_1) \Edge(S_1)(C) \Edge(C)(R_1)
\end{tikzpicture}
\end{center}
\caption{Paths to $R_1$}
\end{subfigure}\begin{subfigure}{.25\textwidth}
\begin{center}
\begin{tikzpicture}
\SetVertexMath
\SetGraphUnit{.6}
\tikzstyle{every node}=[font=\scriptsize]
\renewcommand*{\VertexInterMinSize}{10pt}
\Vertex{S_1} \SOEA[NoLabel](S_1){A} \NOEA(A){S_2}
\SetVertexNoLabel
\SO(A){B} \SOWE(B){C} \SOEA(B){D} \SOEA(C){E} \SO(E){F}
\SetVertexLabel
\SOWE(F){R_1} \SOEA(F){R_2}
\SetUpEdge[style={->}]
\Edge(S_1)(A) \Edge[color=black!30,lw=1.5pt](A)(B) \Edge(B)(C) \Edge(C)(E)
\Edge[color=black!30,lw=1.5pt](E)(F) \Edge(F)(R_2) \Edge(S_2)(D) \Edge(D)(R_2)
\end{tikzpicture}
\end{center}
\caption{Paths to $R_2$}
\end{subfigure}\begin{subfigure}{.24\textwidth}\begin{center}
\begin{tikzpicture}
\SetVertexMath
\tikzstyle{every node}=[font=\scriptsize]
\renewcommand*{\VertexInterMinSize}{10pt}
\SetGraphUnit{.7}
\Vertex{R_1} \SOEA[NoLabel](R_1){A} \NOEA(A){R_2} \SetGraphUnit{1.1}\SO(A){R_1R_2} 
\SetUpEdge[style={->}]
\Edge(R_1)(A) \Edge(R_2)(A) \Edge(A)(R_1R_2)
\end{tikzpicture}
\end{center}
\caption{Code Graph}
\end{subfigure}
\caption{Convoluted choice of paths}
\label{fig:badcodepath}
\end{figure}

A slight modification of this construction shows that taking a set of paths with the minimum number of coding points is insufficient to guarantee that the in-degree of every coding point is at least two. For simplicity, edges between sources and receivers are omitted.
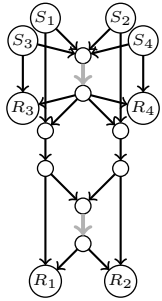
\begin{figure}[htbp]
\begin{center}
\begin{tikzpicture}
\SetGraphUnit{.5}
\SetVertexMath
\tikzstyle{every node}=[font=\tiny]
\renewcommand*{\VertexInterMinSize}{6pt}
\renewcommand*{\VertexInnerSep}{.5pt}
\Vertex{S_1} \SOEA[NoLabel](S_1){A} \NOEA(A){S_2}
\SetVertexNoLabel
\SO(A){B} \SOWE(B){C} \SOEA(B){D} \SO(C){G} \SO(D){H} \SOEA(G){E} \SO(E){F}
\SetVertexLabel
\SOWE(F){R_1} \SOEA(F){R_2} 
\SetGraphUnit{.3} \SOWE(S_1){S_3} \SOEA(S_2){S_4} \NOWE(C){R_3} \NOEA(D){R_4}
\SetUpEdge[style={->}]
\Edge(S_1)(A) \Edge(S_2)(A) \Edge[color=black!30,lw=1.5pt](A)(B) \Edge(B)(C) \Edge(B)(D) \Edge(C)(G) \Edge(D)(H) \Edge(G)(E) \Edge(H)(E)
\Edge[color=black!30,lw=1.5pt](E)(F) \Edge(F)(R_1) \Edge(F)(R_2) \Edge(S_1)(C) \Edge(G)(R_1) \Edge(S_2)(D) \Edge(H)(R_2)
\Edge(S_3)(A) \Edge(S_4)(A) \Edge(B)(R_3) \Edge(B)(R_4) \Edge(S_3)(R_3) \Edge(S_4)(R_4)
\end{tikzpicture}
\end{center}
\caption{Bottom coding point has in-degree one when taking paths analogous to the above}
\end{figure}

Anderson \emph{et al.} \cite{anderson} also provide a criterion to determine when a labeled network is a code graph:
\begin{prop}
Let $\Gamma = (V,E)$ be a vertex-labeled, directed acyclic graph where each vertex $v$ is labeled with a finite set $L_v$. Let $\mathcal{S} := \{v \in V \mid v \text{ has in-degree } 0\}, \mathcal{Q} := V\backslash\S$, and $\mathcal{R} = \bigcup_{v\in V} L_v$. Suppose:
\begin{itemize}
\item The in-degree of every vertex in $\mathcal{Q}$ is at least 2.
\item For each $R\in\mathcal{R}$, the set $V_R = \{v \in V: R \in L_v\}$ has $\abs{\mathcal{S}}$ vertices.
\item For each $R \in \mathcal{R}$ there is a set $\Pi_R = \{\pi_{S,R}\mid S\in \mathcal{S}\}$ of vertex-disjoint paths where every vertex and edge of $\Gamma$ is contained in some $\pi_{S,R}$.b
\end{itemize}
Then $\Gamma$ is the code graph for a reduced multicast network whose sources, coding points, and receivers are in one-to-one correspondence with the elements of $\mathcal{S},\mathcal{Q},$ and $\mathcal{R}$, respectively.
\end{prop}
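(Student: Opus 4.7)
My plan is to prove the proposition constructively: I would build a multicast network $\G$ together with a choice of edge-disjoint path systems $\{\P_R:R\in\R\}$ from the data of $\Gamma$, and then check that the code graph associated with $(\G,\{\P_R\})$ is $\Gamma$. The three hypotheses enter the argument at three distinct places: the in-degree-$\geq 2$ condition will make the distinguished edges of $\G$ satisfy the coding-point definition, the cardinality $\abs{V_R}=\abs{\S}$ together with the existence of $\Pi_R$ will secure the mincut between $\S$ and each $R$, and the global coverage property of the $\Pi_R$ will ensure that the resulting network is ``reduced.''

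Concretely, for each $S\in\S$ I would introduce a source vertex of $\G$ named $S$; for each $R\in\R$ a receiver vertex named $R$; and for each $q\in\Q$ a pair $q^-,q^+$ joined by a single directed edge $e_q=(q^-,q^+)$, intended to become the coding point corresponding to $q$. Since sources in $\Gamma$ have in-degree $0$, every edge of $\Gamma$ is of the form $(u,q')$ with $q'\in\Q$; for each such edge I would add to $\G$ a fresh directed internal path from $u$ itself (if $u\in\S$) or from $q^+$ (if $u=q$) to $q'^-$, where all these internal paths are pairwise edge-disjoint and share no interior vertices. Analogously, for each pair $(v,R)$ with $R\in L_v$ I would add a fresh internal path from the out-side of $v$'s $\G$-object to the receiver vertex $R$. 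Acyclicity of $\G$ is inherited from that of $\Gamma$. Each $\pi_{S,R}\in\Pi_R$ then lifts to a path $P_{S,R}$ in $\G$ obtained by concatenating the fresh internal paths associated with the edges of $\pi_{S,R}$ together with the coding edges $e_q$ for each internal coding-point vertex $q$ along it. The vertex-disjointness of the $\pi_{S,R}$ for fixed $R$ and the freshness of the internal paths force the lifted family $\P_R=\{P_{S,R}:S\in\S\}$ to be edge-disjoint, giving $\mincut(\S,R)\geq\abs{\S}$.

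The main obstacle will be the verification that, under this choice of $\{\P_R\}$, the set of coding points of $\G$ is exactly $\{e_q:q\in\Q\}$. For the inclusion $\supseteq$, each $e_q$ must lie on two lifted paths $P_{S,R}$ and $P_{S',R'}$ entering $q^-$ via distinct previous edges with $(S,R)\neq(S',R')$; this follows from the in-degree-$\geq 2$ condition at $q$ in $\Gamma$ together with the fact that every edge of $\Gamma$ into $q$ is covered by at least one $\pi_{S,R}$, which forces two such lifts to exist with distinct predecessors at $q^-$. For the inclusion $\subseteq$, no internal edge of a fresh connecting path can be a coding point: such an edge has a unique parent determined by the structure of the fresh path, so every lifted path that uses this edge has the same parent at its tail, violating the distinct-parents requirement. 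Once the coding-point set is identified, the code graph of $(\G,\{\P_R\})$ has vertex set in bijection with $\S\cup\Q$, and its edge set and vertex labels agree with those of $\Gamma$ by the construction of the fresh internal paths; the receiver-to-vertex labels $L_v$ are recovered exactly by the added $(v,R)$-paths. Careful bookkeeping of which fresh internal edges lie on which lifted paths, and how coding-direct paths of $\G$ correspond to edges of $\Gamma$, will be the most delicate part of the write-up.
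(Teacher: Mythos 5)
There is a genuine gap here, and it sits exactly where this paper aims its criticism. A point of context first: the paper does not prove this proposition at all---it quotes it as a criterion of Anderson \emph{et al.}\ \cite{anderson}, and then devotes Figures~\ref{fig:badpath} and~\ref{fig:allpaths} to arguing that the listed hypotheses are \emph{insufficient}. It exhibits a labeled DAG satisfying the conditions whose bottom vertex cannot act as a coding point, and further notes that even demanding that every choice of vertex-disjoint path systems jointly covers all vertices and edges does not help. So no blind proof of the statement as written can be correct, and yours indeed breaks at an identifiable step.

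The step that fails is your inclusion $\supseteq$, i.e.\ the verification that each blown-up edge $e_q=(q^-,q^+)$ is a coding point of the constructed network. The definition of a coding point requires \emph{distinct sources} $S\neq S'$ \emph{and} distinct receivers $R\neq R'$ whose chosen paths both use $e_q$ with distinct parents at $q^-$; you weaken this to $(S,R)\neq(S',R')$ and then argue that in-degree at least $2$ plus edge coverage ``forces two such lifts to exist with distinct predecessors.'' Coverage together with vertex-disjointness within each $\Pi_R$ does give two lifted paths through $q^-$ with distinct parents and distinct receivers (at most one path of a fixed $\Pi_R$ can pass through $q$), but nothing in the hypotheses forces those two paths to originate at distinct sources. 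This is precisely the situation in Figure~\ref{fig:badpath}: both chosen paths through the vertex labeled $R_1R_2$ start at the same unlabeled source---once through the middle vertex and once directly---so under your lifted path systems the edge $e_q$ for that vertex fails the distinct-source clause, that vertex is not reproduced as a coding point, and the code graph of your network is not $\Gamma$. Re-choosing paths inside your constructed network does not rescue the argument: as the paper observes, the only repair is to reroute so that the edge between the two coding points is never used, which changes the code graph and again fails to realize $\Gamma$. Your overall skeleton (splitting each vertex of $\mathcal{Q}$ into an edge, fresh internally disjoint connector paths, lifting $\Pi_R$ to an edge-disjoint $\mathcal{P}_R$ so that $\mincut(\S,R)\geq\abs{\S}$) is sensible and would be the natural route if the hypotheses were strengthened to control which sources feed each in-edge---which is exactly the additional constraint the paper says is needed and later imposes through its choice of receiver placements.
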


In Figure~\ref{fig:badpath}, we find that the condition that a single choice of vertex-disjoint paths using all edges and vertices may be insufficient to guarantee that a graph is a code graph of some multicast network. In this case, the bottom node cannot act as a coding point as the two paths to it originate from the same source. One can note that the edge between the coding points can be avoided completely when instead taking the path directly from the second source to the bottom coding point as the path to $R_1$. 
\begin{figure}[htbp]
\begin{subfigure}{.33\textwidth}
\begin{center}
\begin{tikzpicture}
\SetVertexMath
\tikzstyle{every node}=[font=\footnotesize]
\renewcommand*{\VertexInterMinSize}{15pt}
\SetGraphUnit{.8}
\Vertex{R_1} \SOEA(R_1){R_2} \NOEA[NoLabel](R_2){A} \SOEA(R_2){R_1R_2}
\SetUpEdge[style={->}]
\Edge(R_1)(R_2) \Edge(A)(R_2) \Edge(R_2)(R_1R_2) \Edge(A)(R_1R_2)
\end{tikzpicture}
\end{center}
\subcaption{The Code Graph}
\end{subfigure}\begin{subfigure}{.33\textwidth}
\begin{center}
\begin{tikzpicture}
\SetVertexMath
\tikzstyle{every node}=[font=\footnotesize]
\renewcommand*{\VertexInterMinSize}{15pt}
\SetGraphUnit{.8}
\Vertex{R_1} \SOEA(R_1){R_2} \NOEA[NoLabel](R_2){A} \SOEA(R_2){R_1R_2}
\SetUpEdge[style={->}]
 \Edge(A)(R_2) \Edge(R_2)(R_1R_2)
\end{tikzpicture}
\subcaption{Paths to $R_1$}
\end{center}
\end{subfigure}\begin{subfigure}{.33\textwidth}
\begin{center}
\begin{tikzpicture}
\SetVertexMath
\tikzstyle{every node}=[font=\footnotesize]
\renewcommand*{\VertexInterMinSize}{15pt}
\SetGraphUnit{.8}
\Vertex{R_1} \SOEA(R_1){R_2} \NOEA[NoLabel](R_2){A} \SOEA(R_2){R_1R_2}
\SetUpEdge[style={->}]
\Edge(R_1)(R_2)  \Edge(A)(R_1R_2)
\end{tikzpicture}
\end{center}
\subcaption{Paths to $R_2$}
\end{subfigure}
\caption{Convoluted Choice of Paths}
\label{fig:badpath}
\end{figure}
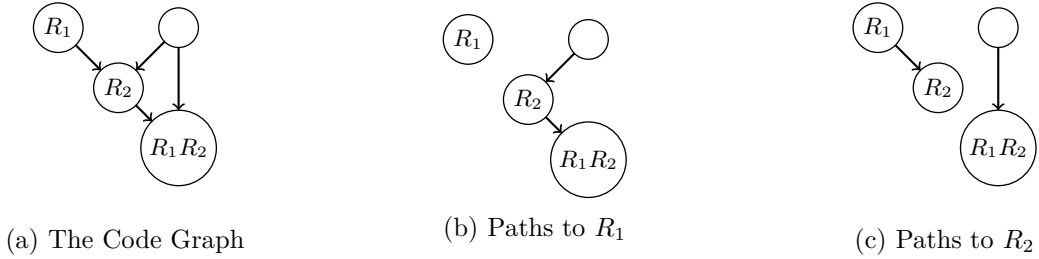

Note that it is still insufficient to require that all choices of vertex disjoint paths $\{\Pi_R\}_{R\in\R}$ use all edges/vertices. Consider Figure~\ref{fig:allpaths} below, which has the following forced vertex disjoint paths but for which the bottom vertex cannot be a coding point. Further in this paper, we will require various receiver placements which will ensure that the formed labeled directed acyclic graphs are code graphs.
\begin{figure}[htbp]
\begin{subfigure}{.33\textwidth}
\begin{center}
\begin{tikzpicture}
\tikzstyle{every node}=[font=\tiny]
\renewcommand*{\VertexInterMinSize}{15pt}
\SetGraphUnit{.65}
\Vertex[L={$R_2$}]{1} \SOEA[L={$R_1$}](1){4} \SOEA[L={$R_1R_2$}](4){6} \NOEA[NoLabel](4){2} \SOEA[L={$R_2$}](2){5} \NOEA[L={$R_1$}](5){3}

\SetUpEdge[style={->}]
\Edges(1,4,6) \Edges(2,4) \Edges(2,5) \Edges(3,5,6)
\end{tikzpicture}
\end{center}
\caption{The Code Graph}
\end{subfigure}\begin{subfigure}{.33\textwidth}
\begin{center}
\begin{tikzpicture}
\tikzstyle{every node}=[font=\tiny]
\renewcommand*{\VertexInterMinSize}{15pt}
\SetGraphUnit{.65}
\Vertex[L={$R_2$}]{1} \SOEA[L={$R_1$}](1){4} \SOEA[L={$R_1R_2$}](4){6} \NOEA[NoLabel](4){2} \SOEA[L={$R_2$}](2){5} \NOEA[L={$R_1$}](5){3}

\SetUpEdge[style={->}]
\Edges(1,4) \Edges(2,5,6)
\end{tikzpicture}
\end{center}
\caption{Paths to $R_1$}
\end{subfigure}\begin{subfigure}{.33\textwidth}
\begin{center}
\begin{tikzpicture}
\tikzstyle{every node}=[font=\tiny]
\renewcommand*{\VertexInterMinSize}{15pt}
\SetGraphUnit{.65}
\Vertex[L={$R_2$}]{1} \SOEA[L={$R_1$}](1){4} \SOEA[L={$R_1R_2$}](4){6} \NOEA[NoLabel](4){2} \SOEA[L={$R_2$}](2){5} \NOEA[L={$R_1$}](5){3}

\SetUpEdge[style={->}]
\Edges(2,4,6) \Edges(3,5)
\end{tikzpicture}
\end{center}
\caption{Paths to $R_2$}
\end{subfigure}
\caption{Only one choice of paths (but not a code graph)} 
\label{fig:allpaths}
\end{figure}
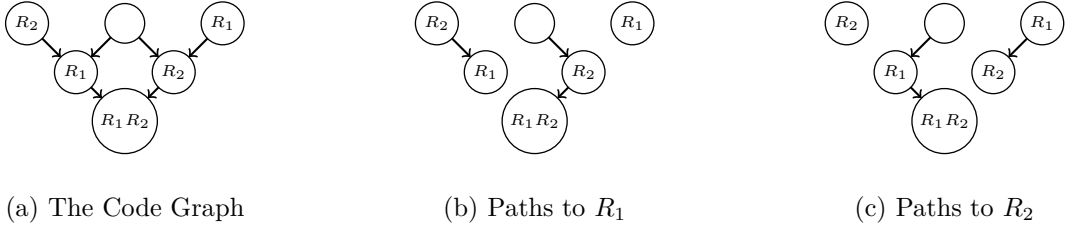

There exists extensive literature, eg. \cite{KM03}, \cite{Ksc11}, \cite{SYLL15}
, that follow the approach of assigning edge transfer coefficients or vertex transfer matrices directly to the multicast network. Fragouli and Soljanin \cite{FS06} introduced (as coding vectors) and Anderson \emph{et al.}\ \cite{anderson} expanded on the concept of $\mathbb{F}_q$-labelings of code graphs, which allow us to focus on the linear dependence and independence conditions of a single matrix. 

\begin{defi}
Let $\mathcal{G}=(\mathcal{V},\E,\S,\R ,\{\mathcal{P}_R\mid R \in
\mathcal{R}\})$ be a multicast network and $\Gamma=(V,E)$ be its corresponding code graph. Each $v \in V$ is labeled with a set of receivers $L_v \subseteq \mathcal{R}$. Let $V_R = \{v \in V \mid R \in L_v\}$. An $\mathbb{F}_q$-labeling of $\Gamma$ is an assignment of elements of $\mathbb{F}_q^{\abs{\mathcal{S}}}$ to the vertices of $\Gamma$ satisfying:
\begin{itemize}
\item The vectors assigned to the source nodes of the code graph are
  linearly independent and without loss of generality they can be chosen to be the standard basis.
\item The vectors assigned to vertices labeled with a common receiver are linearly independent.
\item The vector assigned to a coding point $Q \in V$ is in the span of vectors assigned to the tails of the directed edges terminating at $Q$.
\end{itemize}
We call the $\abs{\mathcal{S}} \times \abs{V}$ matrix consisting of the vectors of the $\F_q$-labeling, an $\F_q$-labeling matrix of $\Gamma$.
\end{defi}

Anderson \emph{et al.} \cite{anderson} note that the capacity of
$\mathcal{G}$ is achievable over $\mathbb{F}_q$ if and only if there
exists an $\mathbb{F}_q$-labeling of $\Gamma$. With this, it suffices
to examine properties of code graphs as opposed to complete
networks. In this paper, we study the solvability of a multicast
network over various finite fields upon the addition of receiver placements.

\begin{defi}
  \label{def:placementlabel}
  Let
  $\mathcal{G}=(\mathcal{V},\E,\S,\R ,\{\mathcal{P}_R\mid R \in
  \mathcal{R}\})$ be a multicast network and $\Gamma=(V,E)$ be its
  corresponding code graph and $R\in \R$. We call the set
  $V_R = \{v\in V \mid R \in L_v\}$ a receiver placement of $R$ and a
  vertex $v \in V_R$ a label of $R$ or more generally, a receiver label. The determinant of a receiver
  placement of $R$ is the maximal minor of the $\F_q$-labeling matrix
  of $\Gamma$ with columns corresponding to its
  labels.
\end{defi}

Since a set of vectors forming a square matrix is linearly independent if and only if the matrix's determinant is nonzero, we examine the structure of the determinants of receiver placements. In particular, to assist in determining if such an $\F_q$-labeling matrix exists, we will consider the matrix over $\F_q[\alpha_{(u,v)}: (u,v) \in E]$ formed by assigning the standard basis to the sources and variable linear combinations of the parents' vectors, i.e.\ if $N_u$ is the vector in the $\F_q$-labeling matrix corresponding to a vertex $u \in V$, for some $v \in \mathcal{Q}$, we would consider the vector $\sum_{u: (u,v) \in E} \alpha_{(u,v)} \cdot N_u$.
\begin{defi}
Let $\S=\{S_1,\dots,S_n\}$ and $V_R$ be a receiver placement,
i.e. $V_R = \{R^{(1)},\dots,R^{(n)}\}\subset V$. We introduce the following notations:
\begin{itemize}
\item $\pi_{i,j}$ denotes a path from $S_i$ to $R^{(j)}$.
\item $\Pi_{R,\sigma} = \{\pi_{i,\sigma(i)}\mid i \in [n]\}$ for some
  $\sigma \in \mathcal{S}_n$, where $[n] = \{i\}_{i=1}^n$ and
  $\mathcal{S}_n$ is the symmetric group of degree $n$, is a set of
  paths matching the sources to the receiver labeled vertices
\item $\Psi_R = \{\Pi_{R,\sigma}^{(j)} \mid \sigma \in \mathcal{S}_n,
  j \in [m_\sigma]\}$, where $m_\sigma$ is the number of paths,
  possibly 0, for this given matching of sources to receiver labeled
  vertices, consists of all sets of paths from the sources to the
  receiver labeled vertices.
\item $\Phi_R = \{\Pi_{R,\sigma} = \Pi_{R,\sigma}^{(j)} \in \Psi_R
  \mid j \in [m_\sigma], \pi_{i,\sigma(i)}$ are vertex disjoint$\}$
  consists of all sets of vertex disjoint paths from the sources to
  the receiver labeled vertices.

\end{itemize}
\end{defi}
Note that the $\sigma$ corresponding to $\Pi_{R,\sigma}$ is well-defined and unique as we have $n$ sources and $n$ labels, but for a given $\sigma$, $\Pi_{R,\sigma}$ is not necessarily unique---it may not even exist. In a slight abuse of notation, we will also write $(u,v) \in \Pi_{R,\sigma}$ to denote that $(u,v) \in \pi_{i,\sigma(i)}$ for some $\pi_{i,\sigma(i)} \in \Pi_{R,\sigma}$.

\begin{prop}
\label{prop:detterms}
Let $S_1,\dots,S_n$ denote the sources in a code graph with the $\F_q$-labeling matrix denoted $N$. Given a receiver placement of $R$, i.e.\ $V_R = \{R^{(1)},\dots,R^{(n)}\}$, we have
\[\det(N_R) = \sum_{\Pi_{R,\sigma} \in \Phi_R} \sign(\sigma) \prod_{(u,v) \in \Pi_{R,\sigma}} \alpha_{(u,v)}\in \F_q[\alpha_{(u,v)}: (u,v)\in E]\]
where $N_R$ is the submatrix of $N$ corresponding to
$R^{(1)},\dots,R^{(n)}$ and $\alpha_{(u,v)}$ is the transfer
coefficient, also called channel gain, corresponding to the edge
$(u,v)$ and $\F_q[\alpha_{(u,v)}: (u,v)\in E]$ is the multivariate polynomial ring
where variables correspond to the transfer coefficients. 
\end{prop}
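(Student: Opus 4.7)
The plan is to first establish, by induction on a topological order of $\Gamma$, that for every vertex $v \in V$ the assigned vector admits the path expansion
\[
N_v = \sum_{\pi: S \to v} \Bigl(\prod_{(u,w) \in \pi} \alpha_{(u,w)}\Bigr) e_{S(\pi)},
\]
where the sum runs over all directed paths $\pi$ in $\Gamma$ from some source to $v$, and $e_{S(\pi)}$ is the standard basis vector assigned to the source at which $\pi$ begins. The base case $v \in \S$ is immediate, since $N_v$ is already a standard basis vector and the only path is trivial. The inductive step uses the defining recursion $N_v = \sum_{u: (u,v) \in E} \alpha_{(u,v)} N_u$ for coding points, together with the bijection between paths $\pi$ ending in $v$ and pairs $(u, \pi')$ where $(u,v) \in E$ and $\pi'$ is a path to $u$.

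Next I will plug this expansion into $\det(N_R)$ and apply multilinearity column by column. This writes $\det(N_R)$ as a sum over tuples $(\pi_1,\dots,\pi_n)$, where each $\pi_j$ runs from some source $S_{i_j}$ to $R^{(j)}$, of
\[
\Bigl(\prod_{j=1}^n \prod_{(u,v) \in \pi_j} \alpha_{(u,v)}\Bigr)\, \det\bigl[e_{i_1} \mid \dots \mid e_{i_n}\bigr].
\]
The basis determinant vanishes unless $(i_1,\dots,i_n)$ is a permutation of $[n]$, equivalently $i_j = \sigma^{-1}(j)$ for some $\sigma \in \mathcal{S}_n$ (matching the paper's convention that $\pi_{i,\sigma(i)}$ goes from $S_i$ to $R^{(\sigma(i))}$), in which case it equals $\sign(\sigma)$. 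This yields the intermediate identity
\[
\det(N_R) = \sum_{\Pi_{R,\sigma} \in \Psi_R} \sign(\sigma) \prod_{(u,v) \in \Pi_{R,\sigma}} \alpha_{(u,v)}.
\]

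To reduce the sum from $\Psi_R$ to $\Phi_R$, I will use a Lindström--Gessel--Viennot style sign-reversing involution on the non-vertex-disjoint systems. Fix a topological order on $V$ and any total order on the source indices. Given a non-vertex-disjoint $\Pi_{R,\sigma}$, let $v^\ast$ be the topologically earliest vertex lying on two or more paths of $\Pi_{R,\sigma}$, and let $i < j$ be the two smallest source indices whose paths pass through $v^\ast$. Swap the tails of $\pi_{i,\sigma(i)}$ and $\pi_{j,\sigma(j)}$ at $v^\ast$, producing a new system $\Pi_{R,\sigma'}$ with $\sigma' = (\sigma(i)\ \sigma(j)) \circ \sigma$; this reverses $\sign(\sigma)$ while preserving the multiset of edges used and hence the monomial $\prod \alpha_{(u,v)}$. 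Applied twice, the swap returns the original system, because the data $(v^\ast, i, j)$ are intrinsic to the unordered pair of path sets involved; hence it is a fixed-point-free, sign-reversing involution on $\Psi_R \setminus \Phi_R$, and all such contributions cancel, leaving exactly the desired sum over $\Phi_R$.

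The main obstacle will be making the involution in the last step well-defined and provably involutive, in particular handling the case where three or more paths share the canonical vertex $v^\ast$. The remedy above—using topological order to select $v^\ast$ and the smallest two source indices through it—ensures that both $v^\ast$ and the chosen pair $\{i,j\}$ are recovered after the swap, which is the only delicate verification needed to close the argument.
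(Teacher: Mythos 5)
Your proof is correct and follows the same overall route as the paper: establish the path expansion of each column entry (the paper cites Kschischang's line-graph argument where you give a clean induction on a topological order), expand $\det(N_R)$ by multilinearity into a signed sum over $\Psi_R$, and cancel the non-vertex-disjoint systems by a sign-reversing, fixed-point-free involution, which is exactly the role of the paper's Lemma~\ref{lem:detmatch}. The one substantive difference is the canonical choice inside the involution, and yours is in fact the more robust one. The paper selects the lexicographically minimal pair $(i,j)$ of source indices whose paths intersect and then the first common vertex $x$ of those two paths; this data need not be recovered after the swap, so the paper's claim that ``the minimum $(i,j)$ and first vertex of intersection are the same for $\Pi_{R,\sigma}$ and $\mu(\Pi_{R,\sigma})$'' can fail. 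For instance, if $\pi_1$ and $\pi_3$ first meet at $x$ while $\pi_2$ meets $\pi_3$ only at a vertex $w$ occurring after $x$ along $\pi_3$ (and $\pi_1,\pi_2$ are disjoint), the minimal intersecting pair is $(1,3)$; but after swapping tails at $x$, the tail through $w$ belongs to the new first path, the minimal intersecting pair becomes $(1,2)$, and $\mu\circ\mu\neq\mathrm{id}$ as literally defined. Your selection --- the topologically earliest vertex $v^\ast$ lying on two or more paths, then the two smallest indices through it --- avoids this: the swap redistributes subpaths only between the two chosen paths, so the multiset of vertex--path incidences is unchanged, hence $v^\ast$ and the index set of paths through $v^\ast$ are preserved, the same triple $(v^\ast,i,j)$ is recovered, and involutivity genuinely holds; fixed-point-freeness follows since $\sigma'=(\sigma(i)\ \sigma(j))\circ\sigma\neq\sigma$. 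One small point worth making explicit in your write-up: the concatenated head-plus-tail walks are again simple paths because a directed walk in a DAG cannot revisit a vertex. In short, your argument is the paper's argument done carefully, and it repairs the one delicate step --- the well-definedness of the matching --- at which the paper's own proof, read literally, has a gap.
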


This proposition says that the minor corresponding to a receiver
placement in a $\F_q$-labeling matrix can be calculated by the sum
over the sets of vertex disjoint paths to the receiver labeled vertices of the product of the transfer coefficients corresponding to the edges in any of those paths. In other words, sets including vertex-intersecting paths do not affect the minor.

We first show the following property about the set $\Psi_R \backslash
\Phi_R$ of sets of paths with vertex-intersecting paths.  
\begin{lem}
\label{lem:detmatch}
There is a matching of $\Psi_R \backslash \Phi_R$ without fixed points,
meaning a bijective map
$\mu:\Psi_R \backslash \Phi_R \rightarrow \Psi_R \backslash \Phi_R$
with $\mu \circ \mu = id$ and $\mu(\Pi_{R,\sigma})\neq\Pi_{R,\sigma}$ for all
$\Pi_R\in \Psi_R \backslash \Phi_R$, such that for $\mu(\Pi_{R,\sigma}) = \Pi'_{R,\sigma'}$:
\[\sign(\sigma) = -\sign(\sigma') \text{ and }\prod_{(u,v) \in \Pi_{R,\sigma}} \alpha_{(u,v)} = \prod_{(u,v) \in \mu(\Pi_{R,\sigma})} \alpha_{(u,v)}\]
\end{lem}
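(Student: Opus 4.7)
The plan is to construct $\mu$ via a Lindstr\"om--Gessel--Viennot-style tail swap. Fix once and for all a topological ordering $\prec$ of the vertices of $\Gamma$; this exists since $\Gamma$ is a DAG, and in particular every walk in $\Gamma$ is automatically a simple path, so the ``prefix to $v$'' and ``suffix from $v$'' of any path at any vertex $v$ it visits are unambiguously defined. Given $\Pi_{R,\sigma}\in \Psi_R\setminus \Phi_R$, by definition at least one vertex of $\Gamma$ is shared by two or more paths of $\Pi_{R,\sigma}$. Let $v^{\star}$ be the $\prec$-minimum such shared vertex, and let $i<j$ be the two smallest source-indices with $v^{\star}\in\pi_{i,\sigma(i)}$ and $v^{\star}\in\pi_{j,\sigma(j)}$. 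Define $\mu(\Pi_{R,\sigma})$ by leaving every $\pi_{k,\sigma(k)}$ with $k\notin\{i,j\}$ unchanged and replacing the two remaining paths by their tail-swapped versions at $v^{\star}$: concatenate the $S_i$-to-$v^{\star}$ prefix of $\pi_{i,\sigma(i)}$ with the $v^{\star}$-to-$R^{(\sigma(j))}$ suffix of $\pi_{j,\sigma(j)}$, and symmetrically for the other new path. The resulting system matches sources to labels via $\sigma':=\sigma\circ(i\ j)$, so $\mu(\Pi_{R,\sigma})=\Pi'_{R,\sigma'}\in \Psi_R$, and since both new paths still meet at $v^{\star}$ it lies in $\Psi_R\setminus\Phi_R$.

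Next I would verify the two required equalities. Since $\sigma'$ differs from $\sigma$ by a single transposition, $\sign(\sigma')=-\sign(\sigma)$. The tail swap merely re-partitions the edges of $\pi_{i,\sigma(i)}\cup\pi_{j,\sigma(j)}$ between the two new paths and leaves all other paths untouched, so the multiset of edges appearing across $\mu(\Pi_{R,\sigma})$ coincides with that of $\Pi_{R,\sigma}$. Consequently $\prod_{(u,v)\in\Pi_{R,\sigma}}\alpha_{(u,v)}=\prod_{(u,v)\in\mu(\Pi_{R,\sigma})}\alpha_{(u,v)}$, regardless of whether one reads the product with or without multiplicity.

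The main obstacle is showing that $\mu$ is a genuine involution, i.e.\ that the canonical data $(v^{\star},i,j)$ extracted from $\mu(\Pi_{R,\sigma})$ matches the one used to build it. The key observation is that since the multiset of edges used by the whole path system is preserved, the number of paths passing through each vertex of $\Gamma$ is preserved, and therefore the set of shared vertices is identical before and after the swap; hence the $\prec$-minimum shared vertex is again $v^{\star}$. Moreover the multiset of source-indices of paths meeting $v^{\star}$ is unchanged (the new paths indexed by $i$ and $j$ both pass through $v^{\star}$, and no other path was modified), so the two smallest such indices remain $i$ and $j$. A second tail swap at $v^{\star}$ with indices $i,j$ manifestly inverts the first, giving $\mu\circ\mu=\mathrm{id}$, and $\mu$ is fixed-point free because $\sigma'\neq\sigma$. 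The crucial design choice that makes this work is selecting $v^{\star}$ by a \emph{fixed} topological order of $\Gamma$ rather than by any order intrinsic to the paths themselves, so that the selection depends only on the swap-invariant set of shared vertices.
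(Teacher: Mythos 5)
Your proof is correct, and while it is at heart the same tail-swap involution as the paper's, your canonical choice of swap site is genuinely different---and the difference matters. The paper picks the lexicographically minimal pair of source indices $(i,j)$ whose paths intersect and swaps at the first common vertex of those two paths, asserting that this data is unchanged after the swap. That assertion is the classical pitfall of Lindstr\"om--Gessel--Viennot-style arguments: if $\pi_{j,\sigma(j)}$ also meets a third path $\pi_{l,\sigma(l)}$ with $i<l<j$ at a vertex beyond the swap point, then after swapping, the new path indexed by $i$ (which now carries the old $j$-suffix) meets $\pi_{l,\sigma(l)}$, so $(i,l)$ becomes the lex-minimal intersecting pair and a second application of $\mu$ swaps at a different site; as written, the paper's map need not square to the identity. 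Your rule---take the $\prec$-minimum shared vertex $v^{\star}$ for a fixed topological order, then the two smallest indices of paths through $v^{\star}$---avoids this precisely because, as you verify, the set of shared vertices and the set of indices of paths through $v^{\star}$ are invariants of the swap, so the triple $(v^{\star},i,j)$ is recovered intact and $\mu\circ\mu=\mathrm{id}$ genuinely holds; your observation that acyclicity makes the concatenated walks simple paths also cleanly justifies $\mu(\Pi_{R,\sigma})\in\Psi_R\setminus\Phi_R$. Two small polish points: the inference ``multiset of edges preserved, hence path-counts at each vertex preserved'' is slightly loose (a path contributes its endpoint vertices without two incident edges there); it is cleaner to say directly that the tail swap re-partitions the combined vertex multiset of the two affected paths, which yields invariance of the shared-vertex set at once. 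And your reduction of fixed-point-freeness to $\sigma'\neq\sigma$ implicitly uses that $\sigma$ is determined by the path system, which the paper records just before the lemma, so you should cite that. In short: same strategy, but your more careful canonical choice repairs the one step where the paper's own argument is too quick.
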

\begin{proof}
Let $\Pi_{R,\sigma} \in \Psi_R \backslash \Phi_R$ be arbitrary and let sources $S_{i}, S_{j}$ be the minimum $(i,j)$ (under lexicographic ordering) such that $\pi_{i,\sigma(i)}$ and $\pi_{j,\sigma(j)}$ intersect at some vertex. Let $x$ be the first vertex at which these paths intersect. Furthermore, let $\pi_{l,x} \subseteq \pi_{l,\sigma(l)}$ denote the subset of the path $\pi_{l,\sigma(l)}$ going from $S_{l}$ to $x$ and $\pi_{x,\sigma(l)} \subseteq \pi_{l,\sigma(l)}$ denote the subset of the path $\pi_{l,\sigma(l)}$ going from $x$ to $R^{(\sigma(l))}$ for $l = i,j$. 

We define $\mu(\Pi_{R,\sigma}) = \{\pi'_{k,\sigma'(k)}: k \in [n]\}$ where:
\[\sigma'(k) = \begin{cases}\sigma(k) &\text{if }k \neq i, j\\\sigma(j) &\text{if }k = i\\\sigma(i) &\text{if }k = j\end{cases} \text{ and } \pi'_{k,\sigma'(k)} = \begin{cases}\pi_{k,\sigma(k)} &\text{if }k \neq i,j\\\pi_{i,x} \cup \pi_{x,\sigma(j)} &\text{if }k=i\\\pi_{j,x} \cup \pi_{x,\sigma(i)} &\text{if }k=j\end{cases}\]
Note that this $\mu$ satisfies the desired properties:
\begin{itemize}
\item Clearly there is no $\mu(\Pi_{R,\sigma}) = \Pi_{R,\sigma}$ since necessarily distinct portions of the paths from two sources are swapped to get $\mu(\Pi_{R,\sigma})$.
\item $\mu\circ\mu(\Pi_{R,\sigma}) = \Pi_{R,\sigma}$ as the minimum $(i,j)$ and first vertex of intersection are the same for $\Pi_{R,\sigma}$ and $\mu(\Pi_{R,\sigma})$, so applying $\mu$ again simply swaps the swapped portion back the the original paths, returning $\mu(\mu(\Pi_{R,\sigma}))$ to $\Pi_{R,\sigma}$.
\item This is bijective since by the above, $\mu$ is its own inverse.
\item We have that $\sign(\sigma) = -\sign(\sigma')$ as $\sigma' = \tau_{i,j}\circ\sigma$ (where $\tau_{i,j}$ denotes the transposition of $i,j$, which fixes all other elements).
\item $\prod_{(u,v) \in \Pi_{R,\sigma}} \alpha_{(u,v)} = \prod_{(u,v) \in \mu(\Pi_{R,\sigma})} \alpha_{(u,v)}$ as both sets of paths use exactly the same edges with the same multiplicity by definition.\qedhere
\end{itemize}
\end{proof}

We now turn to the proof of the proposition:
\begin{proof}
{(Proposition \ref{prop:detterms})}
Note that by definition of determinant:
\[\det(N_R) = \sum_{\rho \in \mathcal{S}_n} \sign(\rho) \prod_{i=1}^n (N_R)_{i,\rho(i)}\]
where we note that $\rho(i)$ determines at which receiver a path ends and $i$ determines from which source a path originates. As such, based on the line graph (like in Kschischang's argument in Appendix C \cite{Ksc11}), we see that an entry of the matrix is the sum over the paths from $S_i$ to $R^{(\rho(i))}$ of the product over the edges of the transfer coefficients, so:
\[(N_R)_{i,\rho(i)} = \sum_{\pi_{i,\rho(i)}: \text{ a path}}\  \prod_{(u,v) \in \pi_{i,\rho(i)}} \alpha_{(u,v)}\]
where $\pi_{i,\rho(i)}$ is any path from $S_i$ to $R^{(\rho(i))}$. Now expanding $\prod_{i=1}^n (N_R)_{i,\rho(i)}$, which is the product over the sources of the sums over different paths from that source to the desired receiver and thus the sum over the different sets of paths from the sources to the receivers of the product over those paths, we get:
\[\prod_{i=1}^n (N_R)_{i,\rho(i)} = \prod_{i=1}^n 
  \left(\sum_{\pi_{i,\rho(i)}:\text{ a path}} \left(\prod_{(u,v) \in
    \pi_{i,\rho(i)}} \alpha_{(u,v)} \right)\right)= \sum_{\Pi_{R,\sigma} \in \Psi_R
    : \sigma = \rho} \left(\prod_{(u,v) \in \Pi_{R,\sigma}}
    \alpha_{(u,v)}\right)\]
so by the uniqueness of $\sigma$ for a given $\Pi_{R,\sigma}$, we have:
\[\det(N_R) = \sum_{\rho\in\mathcal{S}_n} \sign(\rho) \sum_{\Pi_{R,\sigma} \in \Psi_R: \sigma = \rho} \left(\prod_{(u,v) \in \Pi_{R,\sigma}} \alpha_{(u,v)} \right)= \sum_{\Pi_{R,\sigma} \in \Psi_R} \sign(\sigma) \prod_{(u,v) \in \Pi_{R,\sigma}} \alpha_{(u,v)}\]
Now the only difference between our current expression for $\det(N_R)$ and the desired expression is that the set of paths $\Pi_{R,\sigma}$ for the determinant might not be vertex disjoint. But as a result of the matching in Lemma~\ref{lem:detmatch}, we have that 
\[\sum_{\Pi_{R,\sigma} \in \Psi_R\backslash\Phi_R} \sign(\sigma) \prod_{(u,v) \in \Pi_{R,\sigma}} \alpha_{(u,v)} = \sum_{\{\Pi_{R,\sigma},\mu(\Pi_{R,\sigma})\} \subseteq \Psi_R\backslash\Phi_R} 0 = 0\]
making
\[\det(N_R) = \sum_{\Pi_{R,\sigma} \in \Phi_R} \sign(\sigma) \prod_{(u,v) \in \Pi_{R,\sigma}} \alpha_{(u,v)}\]
as desired. 
\end{proof}

\begin{corollary}\label{c:terms}
The number of terms in $\det(N_R)$ is the number of sets of vertex disjoint paths from $S_1,\dots,S_n$ to $R^{(1)},\dots,R^{(n)}$. 
\end{corollary}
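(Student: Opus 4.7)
The corollary asserts that no cancellation occurs in the expression for $\det(N_R)$ given by Proposition~\ref{prop:detterms}. My plan is to show that the map sending a set of vertex-disjoint paths $\Pi_{R,\sigma}\in\Phi_R$ to its associated monomial $\prod_{(u,v)\in\Pi_{R,\sigma}}\alpha_{(u,v)}$ is injective; once this is established, each term in the signed sum contributes a distinct monomial, so the number of terms equals $|\Phi_R|$.

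First, I would record a simple but crucial observation: since the paths in any $\Pi_{R,\sigma}\in\Phi_R$ are vertex-disjoint, no edge can appear in two different paths of the collection (the tail of a repeated edge would be a shared vertex), and within a single simple path an edge appears at most once. Consequently, the monomial $\prod_{(u,v)\in\Pi_{R,\sigma}}\alpha_{(u,v)}$ is squarefree, and its support equals exactly the edge set $E(\Pi_{R,\sigma})\subseteq E$. Since the $\alpha_{(u,v)}$ are distinct indeterminates in the polynomial ring $\F_q[\alpha_{(u,v)}:(u,v)\in E]$, two elements of $\Phi_R$ produce the same monomial if and only if they use the same set of edges.

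Next I would argue that $E(\Pi_{R,\sigma})$ determines $\Pi_{R,\sigma}$ uniquely. Since the code graph is a directed acyclic graph and the collection is vertex-disjoint, each non-source vertex appearing in $E(\Pi_{R,\sigma})$ has exactly one incoming and one outgoing edge in this edge set (except the receiver-labeled endpoints, which have no outgoing edge in the set). Starting from each source $S_i$, the path $\pi_{i,\sigma(i)}$ can be reconstructed by following the unique outgoing edge in $E(\Pi_{R,\sigma})$ at each successive vertex until arriving at some $R^{(j)}$; this $j$ then equals $\sigma(i)$. Thus the map $\Pi_{R,\sigma}\mapsto E(\Pi_{R,\sigma})$, and hence the map to monomials, is injective on $\Phi_R$.

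Finally, combining injectivity with Proposition~\ref{prop:detterms}, the expression
\[\det(N_R)=\sum_{\Pi_{R,\sigma}\in\Phi_R}\sign(\sigma)\prod_{(u,v)\in\Pi_{R,\sigma}}\alpha_{(u,v)}\]
is a sum of $|\Phi_R|$ distinct monomials each multiplied by $\pm 1$, so no two terms can combine or cancel. The number of terms in $\det(N_R)$ therefore equals $|\Phi_R|$, which is by definition the number of sets of vertex-disjoint paths from $S_1,\dots,S_n$ to $R^{(1)},\dots,R^{(n)}$. The only genuinely substantive step is the edge-set-to-path reconstruction, and that relies only on vertex-disjointness together with acyclicity, both of which are given.
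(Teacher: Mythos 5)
Your proof is correct and follows the same route as the paper, which simply states that the corollary ``follows from Proposition~\ref{prop:detterms}.'' The one thing you add is the justification that distinct elements of $\Phi_R$ yield distinct monomials --- via squarefreeness and the reconstruction of a vertex-disjoint path collection from its edge set --- which is exactly the detail needed to rule out terms combining or cancelling, and which the paper leaves implicit. Your reconstruction argument is sound (vertex-disjointness gives each vertex at most one outgoing edge in the edge set, and any degenerate single-vertex paths are forced because sources have in-degree zero), so the count $|\Phi_R|$ is indeed the number of terms.
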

This follows from Proposition~\ref{prop:detterms}.

\begin{corollary}
For a receiver placement $V_R$, the $\alpha_{(u,v)}$-degree of
$\det(N_R)$ has degree at most 1.
\end{corollary}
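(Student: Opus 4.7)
The plan is to read the conclusion directly off the explicit formula provided by Proposition~\ref{prop:detterms}. Since
\[\det(N_R) = \sum_{\Pi_{R,\sigma} \in \Phi_R} \sign(\sigma) \prod_{(u,v) \in \Pi_{R,\sigma}} \alpha_{(u,v)},\]
the $\alpha_{(u,v)}$-degree of $\det(N_R)$ is bounded by the maximum multiplicity with which the edge $(u,v)$ appears in any single $\Pi_{R,\sigma} \in \Phi_R$. So the entire task reduces to showing that within one vertex-disjoint family $\Pi_{R,\sigma}$, each edge of $E$ is traversed at most once.

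The key observation is that if an edge $(u,v)$ were used by two distinct paths $\pi_{i,\sigma(i)}$ and $\pi_{j,\sigma(j)}$ in $\Pi_{R,\sigma}$, then those two paths would share the vertex $u$ (and indeed also $v$), contradicting the vertex-disjointness that defines $\Phi_R$. Hence each $(u,v) \in E$ appears in at most one path of $\Pi_{R,\sigma}$; furthermore, within a single path it clearly appears at most once since paths are simple. Therefore each monomial $\prod_{(u,v) \in \Pi_{R,\sigma}} \alpha_{(u,v)}$ is squarefree in every variable.

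Taking the maximum over all $\Pi_{R,\sigma} \in \Phi_R$ of the $\alpha_{(u,v)}$-degree of the corresponding monomial then yields a bound of $1$ on the $\alpha_{(u,v)}$-degree of $\det(N_R)$ itself. There is no real obstacle here: the whole content is the remark that vertex-disjoint paths are automatically edge-disjoint, which makes each summand in Proposition~\ref{prop:detterms} multilinear in the transfer coefficients.
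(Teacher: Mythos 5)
Your argument is correct and is essentially the paper's own proof: both reduce the claim to the observation that vertex-disjointness of the paths in each $\Pi_{R,\sigma}$ forces every edge to be traversed at most once, so each monomial in the formula of Proposition~\ref{prop:detterms} is squarefree in every transfer coefficient. You merely spell out the (easy) detail that a shared edge would force a shared vertex, which the paper leaves implicit.
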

This follows by noting that since the paths are vertex-disjoint, any edge can be traversed at most once among a set of paths. Therefore the corresponding variable can only appear once in a monomial corresponding to some path.


\section{Triangular Semilattice Network}

In this section, we introduce and discuss properties of the triangular semilattice network, a code graph with a structure that visually resembles an inverted equilateral triangle. We then seek to add receiver placements to require a greater minimum field size.

\begin{defi}\label{def:triangular}
Let a triangular semilattice code graph \net{n} of length $n$ for $n
\in \mathbb{N}\setminus\{0\}$ be a code graph with its underlying
directed acyclic graph given by the set:
\begin{itemize}
\item $V = \{(x,y) \in \Z^2 \mid x,y \geq 0, x + y < n\}$ is the
  vertex set;
\item $E = \{((x+1,y),(x,y)): 0 \leq x+y < n-1\} \cup
  \{((x,y+1),(x,y)), 0 \leq x+y < n - 1\}$ is the set of edges.
\end{itemize}

For $1 \leq i \leq n$, we call the set of vertices
$\{(a,b) \mid a+b = n-i\}$ the $i^{th}$ level where the $1^{st}$ level
is denoted the top level and the $n^{th}$ level is denoted the bottom
level. We enumerate the vertices in increasing order of level and then increasing
order of the $x$ coordinate within the level.

We may refer to the triangular semilattice network of length $n$ as
any network with associated code graph \net{n}.
\end{defi}

Figure~\ref{fig:lattice3} shows \net{3} without receiver
labels but the enumeration of the vertices. Later in this work, we will often identify vertices with the value in this enumeration.

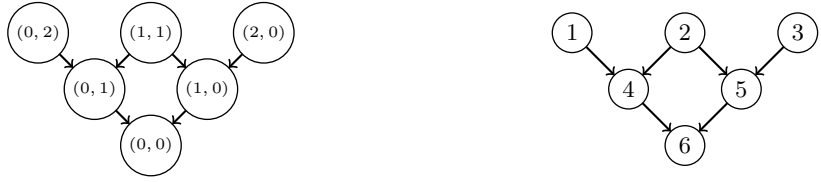
\begin{figure}[htbp]
\centering\begin{subfigure}{.43\textwidth}
\begin{center}
\begin{tikzpicture}
\tikzstyle{every node}=[font=\tiny]
\renewcommand*{\VertexInterMinSize}{20pt}
\SetGraphUnit{.75}
\Vertex[L={$(0,2)$}]{1} \SOEA[L={$(0,1)$}](1){4} \SOEA[L={$(0,0)$}](4){6} \NOEA[L={$(1,1)$}](4){2} \SOEA[L={$(1,0)$}](2){5} \NOEA[L={$(2,0)$}](5){3}

\SetUpEdge[style={->}]
\Edges(1,4,6) \Edges(2,4) \Edges(2,5) \Edges(3,5,6)
\end{tikzpicture}
\end{center}
\caption{Definition of \net{3}.}
\end{subfigure}\begin{subfigure}{.43\textwidth}
\begin{center}
\begin{tikzpicture}
\tikzstyle{every node}=[font=\footnotesize]
\renewcommand*{\VertexInterMinSize}{15pt}
\SetGraphUnit{.75}
\Vertex[L={$1$}]{1} \SOEA[L={$4$}](1){4} \SOEA[L={$6$}](4){6} \NOEA[L={$2$}](4){2} \SOEA[L={$5$}](2){5} \NOEA[L={$3$}](5){3}

\SetUpEdge[style={->}]
\Edges(1,4,6) \Edges(2,4) \Edges(2,5) \Edges(3,5,6)
\end{tikzpicture}
\end{center}
\caption{Enumeration of the vertices.}
\end{subfigure}
\caption{Representation of a triangular semilattice code graph \net{3}
with vertex enumeration.}
\label{fig:lattice3}
\end{figure}

\begin{defi}\label{d:side-receivers}
Let the left-side refer to the $n$ vertices in the \net{n} with $x$-coordinate equal to 0. Similarly the right-side refers to the $n$ vertices with $y$-coordinate equal to 0. We collectively refer to these as the sides.
\end{defi}
Note that embedding \net{n} as above, the left side corresponds with vertices without left children and the right side corresponds with vertices without right children.



\subsection{Valid Receiver Placements}

We introduce some more definitions and lemmas to help us prove the
characterization of valid receiver placements, meaning labeled
vertices distributed such that there is a choice of disjoint paths
between sources and labeled vertices. 

\begin{defi}
A $k$-triangle in a triangular semilattice network \net{n} is a
subgraph isomorphic as a directed graph to a triangular semilattice
network \net{k}. We call $k$ the length of a $k$-triangle. 
\end{defi}
We will drop $k$ if the length of the triangle is clear from the context.
Note that length can also be defined via the length of the longest path between any two vertices in the triangle (also considering number of vertices for length) or the number of vertices along the top of the triangle.

\begin{defi}
  Given a receiver placement of $R$, a $k$-triangle is overcrowded if there are at least $k+1$ labels
  among its vertices. It is crowded if there are exactly $k$ labels. A
  $k$-triangle is distributed if no triangle contained in it is
  overcrowded.
\end{defi}

\begin{remark}
  It is insufficient to just consider $(n-1)$-triangles for the
  distributed property. Consider the network in
  Figure~\ref{fig:bigcrowd}, where the receiver labeled vertices are shown in
  gray. Note that there are 3 labels in a 2-triangle, making it
  not distributed but there are not 4 labels in a 3-triangle.

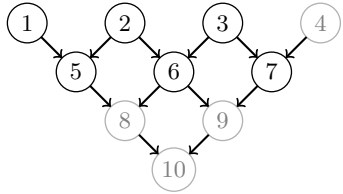
\begin{figure}[htbp]
\begin{center}
\begin{tikzpicture}
\tikzstyle{every node}=[font=\footnotesize]
\renewcommand*{\VertexInterMinSize}{15pt}
\SetGraphUnit{.65}
\Vertex{1} \SOEA(1){5} {\renewcommand{\VertexLineColor}{black!30} \renewcommand{\VertexTextColor}{black!50}\SOEA(5){8} \SOEA(8){10} \NOEA(10){9}} \NOEA(9){7} \NOWE(7){3} {\renewcommand{\VertexLineColor}{black!30} \renewcommand{\VertexTextColor}{black!50} \NOEA(7){4}} \SOWE(3){6} \NOWE(6){2}

\SetUpEdge[style={->}]
\Edges(1,5,8,10) \Edges(2,6,9) \Edges(3,7) \Edges(4,7,9,10) \Edges(3,6,8) \Edges(2,5)
\end{tikzpicture}
\end{center}
\caption{No overcrowded $(n-1)$-triangle but an overcrowded 2-triangle}
\label{fig:bigcrowd}
\end{figure}
\end{remark}

\begin{defi}
We say that two vertices $a$ and $b$ are consecutive if they share a child. A sequence $a_1,\dots,a_k$ of distinct vertices has consecutive vertices if $a_i$ and $a_{i+1}$ are consecutive for every $i=1,\dots,k-1$. A vertex $c$ is between $a$ and $b$ if there is a sequence of consecutive vertices with extremals $a$ and $b$ containing $c$.
\end{defi}
Intuitively, consecutive vertices are ``next to" each other on the same level of the network.

\begin{defi}
For two distinct vertices $a,b$ on the same level, we say that $a$ is to the left of $b$ (equivalently that $b$ is to the right of $a$) if its value in the enumeration is less than that of $b$.
\end{defi}

\begin{defi}
For a vertex $a$ to the left of some vertex $b$,
we say some vertex $c$ is trapped between $a$ and $b$ if the vertex is
in the next level and it is between $a$'s right child and
$b$'s left child.
\end{defi}
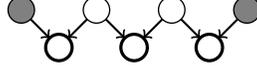
\begin{figure}[htbp]
\begin{center}
\begin{tikzpicture}
\SetGraphUnit{1}
\tikzstyle{every node}=[font=\footnotesize]
\renewcommand*{\VertexInterMinSize}{10pt}
\SetVertexNoLabel
{\renewcommand{\VertexLightFillColor}{black!50}\Vertex{A} \Vertex[x=3,y=0]{D}} \EA(A){B} \EA(B){C} {\renewcommand{\VertexLineWidth}{1.2pt}\Vertices[x=.5,y=-.5]{line}{E,F,G}}
\SetUpEdge[style={->}]
\Edge(A)(E) \Edge(B)(E) \Edge(B)(F) \Edge(C)(F) \Edge(C)(G) \Edge(D)(G)
\end{tikzpicture}
\caption{The thickly-outlined vertices are trapped between the two filled-in vertices}
\label{fig:trapped}
\end{center}
\end{figure}

\begin{defi}
\label{def:extend}
The extension of a $k$-triangle is the $(k+1)$-triangle containing the original $k$-triangle and all parents of the vertices in the $k$-triangle.
\end{defi}

\begin{lem}
\label{lem:crowd}
Let \net{n} be an distributed triangular semilattice network and a sequence of consecutive vertices where each vertex is contained in a crowded triangle. Then, there is a crowded triangle containing all vertices in this sequence.
\end{lem}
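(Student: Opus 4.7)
The plan is to induct on the length $s$ of the sequence. The base case $s=1$ is immediate from the hypothesis. For the inductive step with $s \geq 2$ and sequence $a_1, \ldots, a_s$, I apply the inductive hypothesis to the two overlapping length-$(s-1)$ subsequences $a_1,\ldots,a_{s-1}$ and $a_2,\ldots,a_s$ to obtain crowded triangles $T'$ and $T''$, of lengths $k'$ and $k''$ respectively, that contain them. The candidate for the desired crowded triangle is the \emph{join} $T$: the smallest sub-triangle of \net{n} containing $T' \cup T''$.

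To work with $T$, I parameterize every sub-triangle by its apex $(x,y)$ and top-level value $t = x + y + k - 1$, so that its vertex set is $\{(u,v): u \geq x,\ v \geq y,\ u+v \leq t\}$ with length $k = t - x - y + 1$. Under this parameterization one checks that both the intersection $T' \cap T''$ (when nonempty) and the join $T$ are themselves sub-triangles: the intersection has apex equal to the coordinate-wise maximum of the two apices and top-level $\min(t',t'')$, while the join has apex the coordinate-wise minimum and top-level $\max(t',t'')$. Writing $L$ for the length of $T$ and $m$ for the length of $T' \cap T''$ (with the convention $m=0$ in the empty case), a routine cancellation yields the key identity
\[
L + m = k' + k''.
\]

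Since $a_1,\ldots,a_s$ lie on a single level and $T'$ contains $a_1$ while $T''$ contains $a_s$, comparing the apex coordinates and top-levels of $T'$ and $T''$ against those of $a_1$ and $a_s$ forces $m \geq s - 2 \geq 0$. Let $\lambda(S)=|S\cap V_R|$ denote the number of labels in a set $S$ of vertices; by the distributed hypothesis applied to the sub-triangles $T$ and $T' \cap T''$, we have $\lambda(T) \leq L$ and $\lambda(T' \cap T'') \leq m$. Combining with the crowded conditions $\lambda(T') = k'$ and $\lambda(T'') = k''$ via inclusion-exclusion gives
\[
\lambda(T) \;\geq\; \lambda(T') + \lambda(T'') - \lambda(T' \cap T'') \;\geq\; k' + k'' - m \;=\; L,
\]
so $\lambda(T) = L$ and $T$ is crowded. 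Since $T$ contains $T' \cup T''$ and hence every $a_i$, the induction is complete.

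The main obstacle is establishing the length identity $L + m = k' + k''$ and confirming $m \geq 0$: both reduce to straightforward algebraic manipulation once the apex/top-level parameterization is in place, but handling the empty-intersection case ($m=0$, which can occur when $s=2$ and the two crowded triangles given by the hypothesis are disjoint) requires a small amount of bookkeeping so that the inclusion-exclusion chain still pins down the exact label count in $T$.
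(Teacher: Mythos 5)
Your proof is correct and takes essentially the same route as the paper: the paper likewise inducts on the sequence length and, in its two-vertex case, uses exactly your join/intersection inclusion-exclusion argument (crowded triangles of lengths $k$ and $l$ meeting in a length-$i$ intersection force the containing $(k+l-i)$-triangle to be crowded, by the distributed hypothesis), with your apex/top-level parameterization simply making explicit the length identity $L+m=k'+k''$ and the nonnegativity of $m$ that the paper asserts implicitly. One small repair to your justification: the bound $m \geq s-2$ comes from comparing against the \emph{inner} overlap vertices $a_2 \in T''$ and $a_{s-1} \in T'$ rather than against $a_1$ and $a_s$ as you state (the latter pair only yields the useless $m \geq 2-s$), but this follows immediately from the containments $T' \supseteq \{a_1,\dots,a_{s-1}\}$ and $T'' \supseteq \{a_2,\dots,a_s\}$ that you already have.
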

\begin{proof}
We induct on the length of the sequence. If there is just one such vertex, we are done.

On two consecutive vertices $x$ and $y$, we have a crowded $k$-triangle corresponding to $x$ which may intersect a crowded $l$-triangle corresponding to $y$ (where $k,l$ are some lengths). Note that if the intersection has length $i \geq 0$, it has at most $i$ labels or we have a contradiction. In that case, consider the triangle of length $k+l-i$ containing the two crowded triangles; note that it contains at least the labels in the $k$-triangle and $l$-triangle, which by Inclusion/Exclusion, have at least $k+l-i$ labels combined. By assumption, a $(k+l-i)$-triangle must have at most $k+l-i$ labels, so we have equality, thus forming a crowded triangle.

Now for our inductive step, assume the result for $m\geq 2$ and consider $m+1$ consecutive vertices contained in crowded triangles. By the inductive hypothesis, we have some crowded $l$-triangle containing the first $m$ vertices. We can then apply the case for two vertices to the $m^{\text{th}}$ vertex (with the crowded $l$-triangle) and the $m+1^{\text{st}}$ vertex (with some crowded $k$-triangle) to get some crowded $j$-triangle containing all $m+1$ vertices (where $j,k,l$ are some lengths).
\end{proof}

\begin{lem}
\label{lem:trapmatch}
Let \net{n} be an distributed triangular semilattice network with
$t>1$ labels in the top level. Then, there are $t-1$ unlabeled vertices in the second level such that upon labeling them, the bottom $(n-1)$-triangle is distributed.
\end{lem}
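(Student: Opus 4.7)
Plan.  Enumerate the top-level labels as $i_1 < i_2 < \cdots < i_t$, so that between consecutive labels $i_k$ and $i_{k+1}$ the trapped region $[i_k, i_{k+1}-1]$ in the second level consists of $g_k := i_{k+1} - i_k$ consecutive vertices.  My strategy is to choose $v_k$ to be the leftmost originally unlabeled vertex in this trapped region and process $k = 1, \ldots, t-1$ greedily, maintaining the invariant that the bottom $(n-1)$-triangle is distributed after each step.  That each trapped region contains at least one originally unlabeled vertex is immediate: the smallest subtriangle of $\netm{n}$ with both $i_k$ and $i_{k+1}$ in its top row has length $g_k + 1$ and second row equal to $[i_k, i_{k+1}-1]$, so distributedness caps its label total at $g_k + 1$, of which two already come from $i_k, i_{k+1}$.

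The heart of the proof is to show the greedy step never fails.  Suppose, at step $k$, that no originally unlabeled $v \in [i_k, i_{k+1}-1]$ can be labeled without overcrowding some subtriangle of the (currently distributed) bottom $(n-1)$-triangle.  Each such failing $v$ lies in a currently crowded subtriangle, and each originally labeled vertex of the trapped region trivially lies in the crowded $1$-triangle consisting of itself.  Applying Lemma~\ref{lem:crowd} to the bottom $(n-1)$-triangle (viewed as a distributed $\netm{n-1}$ in the current labeling) yields a single crowded subtriangle $T^*$ that contains all of $[i_k, i_{k+1}-1]$, with top row $[p, p+\ell-1]$ in level $2$ such that $p \leq i_k$, $p + \ell \geq i_{k+1}$, and $T^*$ carries exactly $\ell$ labels in the current labeling.

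Let $s$ count the original top-level labels in $[p, p+\ell]$, indexed as $i_{m'}, \ldots, i_{m'+s-1}$; since $i_k, i_{k+1}$ are among them, $m' \leq k$ and $s \geq k - m' + 2$.  Writing $a(T^*)$ for the number of previously placed $v_1, \ldots, v_{k-1}$ lying in $T^*$ and $n_0(T^*) = \ell - a(T^*)$ for the original label count of $T^*$, the distributed hypothesis of $\netm{n}$ applied to the upward extension $T^{*+}$ (Definition~\ref{def:extend}) gives $n_0(T^*) + s \leq \ell + 1$, so $a(T^*) \geq s - 1$.  On the other hand, since the $v_j$ are strictly increasing, only the $v_j$ with $j \in \{m', \ldots, k-1\}$ automatically lie in $T^*$, and the only other candidate is $v_{m'-1}$ (present only if $m' \geq 2$ and $v_{m'-1} \geq p$).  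Hence $a(T^*) \leq k - 1$ when $m' = 1$ and $a(T^*) \leq k - m' + 1$ when $m' \geq 2$.  In the first case $k - 1 \geq s - 1 \geq k$, which is the contradiction.  In the second case $v_{m'-1} \geq p$ is forced, and the leftmost-unlabeled rule then guarantees that every vertex of $[i_{m'-1}, p - 1]$ in level $2$ is originally labeled; feeding these forced labels into the distributed hypothesis applied to the level-$1$ subtriangle with top row $[i_{m'-1}, p+\ell]$ of length $L = p + \ell - i_{m'-1} + 1$, whose original label count is at least $(s + 1) + (p - i_{m'-1}) + n_0(T^*)$, yields $n_0(T^*) \leq \ell - s$, i.e.\ $a(T^*) \geq s$, contradicting $a(T^*) \leq k - m' + 1 \leq s - 1$.

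The main obstacle is the case $m' \geq 2$: the distributed hypothesis on $T^{*+}$ alone only produces the near-tight bound $a(T^*) \geq s - 1$, which is not enough.  The leftmost-unlabeled selection rule is the crucial lever: the moment $v_{m'-1}$ is forced into $T^*$, the rule converts this into a forced stretch of original labels in $[i_{m'-1}, p - 1]$, and enlarging $T^{*+}$ leftward to the triangle with top $[i_{m'-1}, p + \ell]$ turns those forced labels into the strict bound $a(T^*) \geq s$.  Everything else in the proof is routine bookkeeping on which previously placed $v_j$ can land in $T^*$ and how the distributed budget of the enlarged triangle decomposes.
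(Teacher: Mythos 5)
Your contradiction hypothesis does not match your algorithm, and the mismatch is fatal. You commit to labeling the \emph{leftmost originally unlabeled} vertex of each trapped region, but what your step analysis rules out is that \emph{every} originally unlabeled vertex of the region overcrowds; the negation of ``the leftmost choice is safe'' is not ``no choice is safe.'' And the leftmost choice can genuinely fail while another choice succeeds, already at step $1$, with no prior history involved: in \net{4}, with vertices enumerated $1,\dots,10$ as in the paper, take $V_R=\{1,4,5,8\}$. This placement is distributed, the unique trapped region is $\{5,6,7\}$ with leftmost originally unlabeled vertex $6$; but labeling $6$ puts the three labels $5,6,8$ into the $2$-triangle with top row $\{5,6\}$, overcrowding it, whereas labeling $7$ leaves the bottom $3$-triangle distributed. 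So the greedy rule you state breaks the invariant at the first step, and your ``the greedy step never fails'' argument (which assumes \emph{all} choices fail) simply does not speak to it.

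This matters because your case $m'\geq 2$ --- the part of your proof that goes beyond the paper's --- leans entirely on that rule: once $v_{m'-1}\geq p$ is forced, you need every second-level vertex of $[i_{m'-1},p-1]$ to be originally labeled, and that is exactly what is lost once the algorithm must be allowed to skip unsafe unlabeled vertices (say, taking the leftmost \emph{safe} one, which is all your contradiction actually licenses): a skipped vertex is unlabeled, and the enlarged-triangle bound $a(T^*)\geq s$ evaporates. For comparison, the paper does no leftward bookkeeping at all: it labels an arbitrary safe trapped vertex, and at a failing step covers the whole trapped region by crowded triangles, merges them with Lemma~\ref{lem:crowd}, and extends upward as in Definition~\ref{def:extend} to exhibit $l+2$ labels in an $(l+1)$-triangle. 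To your credit, your $a(T^*)$-versus-$s$ accounting targets a real subtlety that the paper's final count passes over quickly --- previously added second-level labels may lie inside the merged crowded triangle, so its $l$ current labels need not all be original --- but your repair of that point stands or falls with the leftmost rule, and the rule falls. A viable patch would have to use a selection rule your step analysis actually supports; for instance, one could argue that vertices skipped as unsafe lie in crowded triangles that persist (labels are only ever added while the bottom stays distributed, so a crowded triangle stays crowded), and then redo the merge and count over the longer consecutive stretch.
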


\begin{proof}
Let $L$ be the leftmost labeled vertex in the top level. Note that it suffices to show that iteratively, for every top level labeled vertex $v \neq L$,
we can label a previously unlabeled vertex trapped by $u$, the rightmost labeled vertex to the left of $v$, and $v$ such that the bottom $(n-1)$-triangle is distributed.

We prove the claim by contraposition: assume that at some point, there exists a labeled vertex $v \neq L$ in the top level such that we create an overcrowded triangle in the bottom $(n-1)$-triangle for every such labeling. Then, we show that there was originally an overcrowded triangle in the network. In particular, we claim that if every labeling creates an overcrowded triangle, every vertex trapped by $v$ and the previous labeled vertex $u$ is in some crowded triangle. Each of the labeled trapped vertices forms a crowded 1-triangle. Moreover, by assumption, upon labeling each of the unlabeled trapped vertices, it is in a $k$-triangle with $\geq k+1$ labeled vertices. Without that added label, we thus have $\geq k$ labeled vertices in a $k$-triangle. If we have more than $k$ labels in this $k$-triangle, we arrive at a contradiction, otherwise, we have a crowded triangle. We can then apply Lemma~\ref{lem:crowd} to get a crowded $l$-triangle containing all of the trapped vertices. From there, we can extend the triangle to the first level to include $u$ and $v$ as in Definition~\ref{def:extend}, getting $l+2$ labels in an $(l+1)$-triangle in the original graph.
\end{proof}

\begin{thrm}
\label{thm:valid}
  Given a triangular semilattice network \net{n}, a labeling $V_R$ of
  $n$ vertices corresponding to some receiver $R$ is valid, meaning
  that there are vertex-disjoint paths to the vertices labeled by
  $V_R$ from the sources, if and only if the network is distributed.
\end{thrm}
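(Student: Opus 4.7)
I would prove the two implications separately, handling the easier necessity direction first and reducing the sufficiency direction to the preparatory lemmas above.

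For the forward direction ($\Rightarrow$), I proceed by contrapositive. Suppose some $k$-sub-triangle $T$ of \net{n} contains at least $k+1$ labels of $V_R$. A direct coordinate check on Definition~\ref{def:triangular} shows that every edge entering $T$ from outside lands in the top row of $T$, which has exactly $k$ vertices, while no path from a source outside $T$ can reach $T$ when $T$'s top row is at level $1$. Hence every source-to-label path whose target lies in $T$ either starts within $T$ (only possible when $T$'s top row is level $1$) or enters $T$ through its $k$ top-row vertices. In a vertex-disjoint system of $\geq k+1$ such paths, each must consume a distinct top-row vertex of $T$, which is impossible.

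For the reverse direction ($\Leftarrow$) I induct on $n$. The base case $n=1$ is immediate. For the inductive step, let $t$ be the number of labels of $V_R$ in the top level. Since the bottom $(n-1)$-triangle contains $n-t$ labels and cannot be overcrowded, $t \geq 1$. If $t \geq 2$, I apply Lemma~\ref{lem:trapmatch} to augment $V_R$ with $t-1$ unlabeled level-$2$ vertices (one trapped between each consecutive pair of labeled sources) so that the bottom $(n-1)$-triangle remains distributed and now carries $n-1$ labels in total; for $t=1$ no augmentation is needed. By the inductive hypothesis applied to the bottom $(n-1)$-triangle, there is a vertex-disjoint system of paths from its top row (the level-$2$ vertices $v_1,\dots,v_{n-1}$ of \net{n}) to the $n-1$ labels, inducing a bijection $\sigma$.

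The key observation is that level-$2$ vertices have no directed edges among themselves, so any path from $v_i$ ending at a level-$2$ vertex $v_{i'}$ forces $v_i = v_{i'}$; in particular $\sigma$ matches every added label to itself via the trivial path, and the remaining $n-t$ non-added-label level-$2$ vertices are paired with the $n-t$ original labels below level $2$. I then construct a matching sending each such active $v_i$ to an unlabeled parent in $\{S_i, S_{i+1}\}$: with labeled sources $S_{j_1} < \dots < S_{j_t}$ and the added label $v_{m_k}$ sitting in the block between $S_{j_k}$ and $S_{j_{k+1}}$, send $v_i \mapsto S_{i+1}$ for $j_k \leq i < m_k$ and $v_i \mapsto S_i$ for $m_k < i \leq j_{k+1}-1$; on the left flank send $v_i \mapsto S_i$ for $i < j_1$, and on the right flank send $v_i \mapsto S_{i+1}$ for $i \geq j_t$. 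A short telescoping check shows the image within each block is exactly that block's unlabeled sources, so the overall map is an injection onto the set of unlabeled sources. Prepending these edges to the paths supplied by $\sigma$ and appending the $t$ trivial paths at the labeled sources produces the required $n$ vertex-disjoint paths from sources to $V_R$.

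The main obstacle is the final matching step and verifying that the combined paths remain vertex-disjoint; Lemma~\ref{lem:trapmatch} is what makes it work, since inserting exactly one added label per inter-source block decouples the blocks and allows a local bijection between each block's remaining active level-$2$ vertices and its interior unlabeled sources.
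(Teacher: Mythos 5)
Your proposal is correct and takes essentially the same approach as the paper: the forward direction is the same top-row cut bound (which the paper phrases via Menger's theorem rather than your direct coordinate check), and the reverse direction is the same induction on $n$ using Lemma~\ref{lem:trapmatch} to insert $t-1$ trapped labels into level 2, with your block matching of active level-2 vertices to unlabeled parents being exactly the inverse of the paper's matching of unlabeled top-level vertices to children. The only cosmetic differences are that you treat $t=1$ uniformly via the flank rules instead of as a separate case, and you make explicit the (correct) observation that paths between level-2 vertices are trivial, which the paper leaves implicit.
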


\begin{proof}
We first show the forward direction. Fix a valid receiver placement and a triangle of length $k$. Consider the set $S_k$ of the vertices corresponding to the labels in the triangle. Note that the $\mincut$ from the sources to the set $S_k$ is at most $k$, since the top level of the triangle is a cut of size $k$. As such, by Menger's Theorem, there are at most $k$ vertex-disjoint paths to the set $S_k$, and thus, at most $k$ labels in the triangle. 

We now show the other direction by induction on $n$. The base cases of $n=1,2$ are trivial. Now assume the result for $n \geq 2$. Consider a triangular semilattice network \net{n+1} and a receiver placement satisfying the desired property. As there are at most $n$ labels in the bottom triangle of length $n$, there must be at least one label in the top level. We call the leftmost label $L$ and match the remaining $n$ vertices in the top level with the next level as follows.

If there is only one label in the top level, we can iteratively
match/biject all vertices in the first level, from left to right, to
the leftmost unmatched vertex in the next level---in particular, we
match the vertices to the left of $L$ with their right child and those
to the right of $L$ with their left child. Applying the inductive
hypothesis to the bottom $n$-triangle, we can extend the $n$
vertex-disjoint paths from the second level to the receivers to begin
at the sources via the matching. With $\{L\}$, we then have our $n+1$
vertex-disjoint paths to the labels.

Otherwise there are at least two labels in the top level. 
By Lemma~\ref{lem:trapmatch}, we have a matching of the labeled vertices in the top level to some trapped vertices in the next level. Note that if we enumerate the top level's vertices as $a_1,\dots,a_{n+1}$ and the second level's vertices as $b_1,\dots,b_n$, a vertex $a_i$ has children $b_{i-1}, b_i$ if $i-1,i \in [n]$. Now, we match each remaining unlabeled vertex in the top level with an unmatched child as follows:
\begin{itemize}
\item We can match any consecutive vertices $a_1,\dots,a_m$ up to $L$ (exclusive) by matching $a_i$ with $b_i$ for $i=1,\dots,m$. None of those $b_i$ have been matched as they are not trapped by any two labeled vertices.
\item We can match any consecutive vertices $a_t,\dots,a_{n+1}$ after the rightmost labeled vertex in the top level by matching $a_{i+1}$ with $b_i$ for $i=t-1,\dots,n$. Again we note that none of these $b_i$ are trapped by any two labeled vertices.
\item For the unlabeled vertices $a_r,\dots,a_s$ between two labeled vertices $u$ and $v$ in the top level, we match these to $\{b_{r-1},\dots,b_s\}\backslash\{b_p\}$ where $b_p$ is the vertex matched to $v$. For $1<r\leq i\leq p$, we match $a_i$ with $b_{i-1}$ and for $p < i \leq s$, we match $a_i$ with $b_i$.
\end{itemize}
Note that this process creates an bijection between vertices. Within a
section (between the trapped vertices or at the ends), the process is
clearly injective. Across the consecutive sections, we reach a label
at position $a_j$ where the furthest right vertex the left
section matches to is $b_{j-1}$ (and sections further left match
to vertices further left) and the furthest left vertex the right
section matches to is $b_j$.

Finally, by our inductive hypothesis, we have vertex-disjoint paths from the sources/top level of the bottom $n$-triangle to the labels originally there and those added by Lemma~\ref{lem:trapmatch}. The set of vertex-disjoint paths in the original $(n+1)$-triangle is then as follows. Every label in the top level is just a path with a single vertex. For every other label in a lower level, we extend the path found in the bottom $n$-triangle via the matching with the unlabeled sources that we just found. This is vertex disjoint as there are no intersections in the top level and the paths when restricted to the bottom $n$-triangle are either empty or are as found in the inductive hypothesis.
\end{proof}

We can further locate some receiver placements with well-understood
determinants. Previously we denoted transfer coefficients using $\alpha_{(u,v)}$
where $(u,v)\in E$. Henceforth we use $\alpha^{(i)}_j$ for the
transfer coefficients of the triangular semilattice network \net{n}
for $i\in [\abs{\netm{n-1}}]$, where $\abs{\netm{n-1}}$ is the number of vertices in \net{n-1} and thus the bottom $(n-1)$-triangle of \net{n}, and $j \in [2]$. Here, $\alpha^{(i)}_1$ is the transfer
coefficient of the edge between vertex $(i+n)$ and its left parent and
$\alpha^{(i)}_2$ is the one between $(i+n)$ and its right parent. 

\begin{prop}
\label{prop:oneterm}
Let $V_R$ be a receiver placement in a triangular semilattice network
\net{n} consisting of exactly one label per level where each label is
along the sides of the network, and let $V_{R'}$ be the reflected receiver
placement, meaning that its labels are the remaining side labels
together with the bottom one. Then 
\[\det(N_R)\det(N_{R'})=\pm\prod_{i \in [\abs{\netm{n-1}}], j\in [2]}\alpha_j^{(i)}\]
\end{prop}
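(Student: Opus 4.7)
The plan is to prove by induction on $n$ the stronger assertion that both $V_R$ and $V_{R'}$ admit a \emph{unique} set of vertex-disjoint paths from the sources, and that the edge sets used by these two path sets partition the edge set of \net{n}. Given this, Proposition~\ref{prop:detterms} implies that each of $\det(N_R)$ and $\det(N_{R'})$ reduces to a single signed monomial in the $\alpha_j^{(i)}$, and their product is $\pm \prod_{i,j} \alpha_j^{(i)}$, as claimed.

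The base case $n=1$ is immediate since there are no edges and both placements equal $\{(0,0)\}$. For the inductive step, the top-level label of $V_R$ is a source, either $(0,n-1)$ or $(n-1,0)$; by reflective symmetry it suffices to treat the first. Since $(0,n-1)$ has no incoming edges, the only source that can reach it is itself, giving a trivial single-vertex path. The remaining $n-1$ top-level sources must then be routed, via single edges, to the $n-1$ level-2 vertices, which act as the sources of the bottom $(n-1)$-triangle $\netm{n-1}$. The crucial point is that this routing is forced: source $(n-1,0)$ has a unique outgoing edge to $(n-2,0)$; this in turn forces source $(n-2,1)$ to use its left child $(n-3,1)$; and so on inductively, so each source $(k,n-1-k)$ with $k\geq 1$ is routed to $(k-1,n-1-k)$. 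The edges traversed are precisely the right-parent edges of every level-2 vertex.

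Restricting $V_R$ to the bottom $(n-1)$-triangle yields one label per level of $\netm{n-1}$, each on a side of the sub-triangle (the sides of $\netm{n-1}$ coincide with the sides of \net{n} restricted to levels $2,\dots,n$), with $(0,0)$ as the bottom label. The inductive hypothesis then gives a unique set of vertex-disjoint paths inside $\netm{n-1}$, and concatenating with the forced top-level edges produces the unique VD path set for $V_R$ in \net{n}. The symmetric argument applied to $V_{R'}$, whose top-level label is $(n-1,0)$, shows that its top-level portion uses precisely the left-parent edges of every level-2 vertex; by the inductive hypothesis its sub-portion uses exactly the edges of $\netm{n-1}$ not used by $V_R$'s sub-portion. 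Hence the edges of the two path sets partition the edge set of \net{n}.

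Finally, by Proposition~\ref{prop:detterms}, each determinant is a single signed monomial whose variables are exactly the $\alpha_j^{(i)}$ of the edges it uses, and multiplying the two using the edge partition yields $\pm \prod_{i,j}\alpha_j^{(i)}$. The main obstacle is the forcing argument for uniqueness of the top-level matching at each inductive step: one must verify carefully that the boundary source $(n-1,0)$ having only one outgoing edge propagates a rigid left-shift through the entire top-level-to-second-level bijection, so that the induction cleanly reduces to the sub-triangle.
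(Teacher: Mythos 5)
Your proposal is correct, and it takes a genuinely different route from the paper's. The paper also inducts on $n$ by peeling off the top level, but it works linear-algebraically: passing to the field of fractions $\F_q(\alpha_j^{(i)})$, it factors out an explicit bidiagonal basis-change matrix $L$ (resp.\ $T$) according to whether the top label is the left (resp.\ right) corner, with $\det(L)=\prod_i\alpha_2^{(i)}$ and $\det(T)=\pm\prod_i\alpha_1^{(i)}$ accounting for the top layer of edges, and then uses a block decomposition of $L^{-1}N_R$ (or $T^{-1}N_R$) to reduce to the bottom triangle. Your forced top-level matching --- the rigid shift propagating from the out-degree-one corner source, which consumes exactly the right-parent (resp.\ left-parent) edges into the second level --- is precisely the combinatorial shadow of those $L$ and $T$ factors, so the two inductions are parallel in structure but different in substance. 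Your version stays entirely inside the path formalism of Proposition~\ref{prop:detterms}: it needs no matrix inversion over the function field and sidesteps the paper's somewhat terse claim that the label structure of the bottom triangle is ``identical'' after the basis change, at the price of the careful forcing and bijection bookkeeping you flag at the end (which does go through: every edge descends exactly one level, so each nontrivial path meets level $2$ exactly once, and the chain starting at the unique out-edge of the far corner forces the whole top-to-second-level bijection). Your argument also proves a strictly stronger statement that the paper only obtains implicitly: each of $V_R$ and $V_{R'}$ admits a \emph{unique} vertex-disjoint path system, the two systems' edge sets partition $E$, and hence each determinant is individually a squarefree signed monomial whose variables are exactly the edges used --- information that is useful elsewhere (e.g.\ it re-derives via Corollary~\ref{c:terms} that these side placements are valid over every $\F_q$ with all transfer coefficients nonzero). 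One detail you leave implicit but should state when writing this up: the restrictions of $V_R$ and $V_{R'}$ to the bottom $(n-1)$-triangle are again mutual reflections in $\netm{n-1}$, which is what licenses applying the paired inductive hypothesis to them; this is immediate from the definitions but is the hinge of the induction.
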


\begin{proof}
We prove this by induction on the length $n$ of the triangular
semilattice network \net{n}. This is trivial in the case of \net{1}, as there are no variables. In the case of \net{2}, we either take the right source and the bottom vertex---for a determinant of $\alpha_2^{(1)}$---or the left source and the bottom vertex---for a determinant of $\alpha_1^{(1)}$, and we have the product is then $\alpha_1^{(1)}\alpha_2^{(1)}$, as desired.

Now consider the triangular semilattice network \net{n+1} for $n \in \N, n \geq 2$ where we fix a receiver placement such that we have a label in each level along the sides. Let $N_R$ be the submatrix corresponding to this receiver placement. Consider:
\[L = \begin{pmatrix}1&\alpha^{(1)}_1&0&\cdots&0\\0&\alpha^{(1)}_2&\alpha^{(2)}_1&\cdots&0\\\vdots&\vdots&\ddots&\vdots\\0&0&0&\cdots&\alpha^{(n)}_2\end{pmatrix}\qquad\text{or}\qquad L_{i,j} = \begin{cases}1&\text{if }i=j=1\\\alpha_1^{(j-1)}&\text{if }i+1=j\geq2\\\alpha_2^{(j)}&\text{if }i=j\geq2\\0&\text{otherwise}\end{cases}\]
and
\[T = \begin{pmatrix}0&\alpha^{(1)}_1&0&\cdots&0\\0&\alpha^{(1)}_2&\alpha^{(2)}_1&\cdots&0\\\vdots&\vdots&\ddots&\vdots\\1&0&0&\cdots&\alpha^{(n)}_2\end{pmatrix}\qquad\text{or}\qquad T_{i,j} = \begin{cases}1&\text{if }i=n \text{ and }j=1\\\alpha_1^{(j-1)}&\text{if }i+1=j\geq2\\\alpha_2^{(j)}&\text{if }i=j\geq2\\0&\text{otherwise}\end{cases}\]
Extending to the field of fractions $\F_q(\alpha^{(i)}_j\mid i \in
[\abs{\netm{n}}],\  j \in [2])$, note that $L^{-1}$ corresponds to
the basis change taking the leftmost label and the 2$^{\text{nd}}$
level and $T^{-1}$ corresponds to the basis change taking the
rightmost label and the 2$^{\text{nd}}$ level. Further note that
$\det(L) = \prod_{i=1}^{n} \alpha^{(i)}_2$ and $\det(T) =
\pm\prod_{i=1}^n \alpha^{(i)}_1$. To calculate $\det(N_R)$, it suffices
to calculate $\det(LL^{-1}N_R)=\det(L)\det(L^{-1}N_R)$ or
$\det(TT^{-1}N_R)=\det(T)\det(T^{-1}N_R)$. Now after the basis change
(using $L$ if we picked the top left label and $T$ if we picked the
top right label), the label's structure of the bottom $n$-triangle is
identical to that of a triangular semilattice network \net{n}.

Further note that the basis-changed matrix $\bar{N}_R = L^{-1}N_R$ or $T^{-1}N_R$ is in the block matrix form of
\[\bar{N}_R=\begin{pmatrix}1&0\\0&\bar{N}_R'\end{pmatrix}\]
where $\bar{N}_R'$ is the matrix corresponding to the bottom $n$
labels in \net{n}. Expanding by minors, we have
$\det(\bar{N}_R) = \det(\bar{N}_R')$. By inductive
hypothesis we have that $\det(\bar{N}_R')$ is a monomial where the
product of this determinant and that corresponding to the reflection
of the bottom $n$ labels is a monomial with all transfer
coefficients in \net{n}.
As switching between the leftmost top label and the rightmost top label swaps between $L$ and $T$, combining this with the bottom $n$-triangle for the original determinants, we get the desired result.
\end{proof}

As a consequence we obtain that a receiver placement $V_R$ for a
triangular semilattice network \net{n} defined as in Proposition
\ref{prop:oneterm} is a valid receiver for any choice of triangular semilattice network of length $n$
$n$ and there exists an $\F_q$-labeling with nonzero transfer coefficients for any finite field $\F_q$. As such, for the rest of the paper we
consider the triangular semilattice network \net{n} to be equipped
with two receivers: the left-side and the right-side receiver, meaning
the receivers with placements $\{(0,n-1),\dots,(0,0)\}$ and
$\{(n-1,0),\dots,(0,0)\}$ respectively as defined in Definition
\ref{d:side-receivers}.

\subsection{Invariance Under Symmetries of Receiver Placements}

In this section we study properties of minors of $\F_q$-labelings from receiver
placements. We will show that the property of having a
$\F_q$-labeling for a receiver placement implies the existence of an
$\F_q$-labeling for any receiver placement that is obtained from the
original from either rotation or reflection with respect to the
underlining graph of the network.

\begin{defi}
\label{def:tsnrotate}
Let \net{n} defined as in Definition \ref{def:triangular}. Then, the
maps $\rho:V\rightarrow V$ defined as
$\rho(x,y)=(n-1-x-y,x)$ and the map $\sigma: V\rightarrow V$ defined
as $\sigma(x,y)=(y,x)$ are a bijections of the set of vertices with
$\rho^3=id$ and $\sigma^2=id$ respectively. 
\end{defi}

Roughly speaking, $\rho$ represents a counterclockwise rotation of the
vertices whereas $\sigma$ represents a reflection. These two maps can
be naturally extended to subsets of vertices. We are going to use
these maps prevalently on receivers placements, meaning that the directed
structure of the network is not going to change.  Let $V_R=\{v\in V \mid R\in L_v\}$ be a receiver
placement, then $V_{\rho(R)}:=\{\rho(v)\in V \mid R\in L_v\}$ and
$V_{\sigma(R)}:=\{\sigma(v)\in V \mid R\in L_v\}$ are two others receiver
placements. Figure \ref{f:rot/ref}
provide examples for the 3-semilattice network.

\begin{figure}[htbp]
\begin{subfigure}{.33\textwidth}
\begin{center}
\begin{tikzpicture}
\tikzstyle{every node}=[font=\footnotesize]
\renewcommand*{\VertexInterMinSize}{15pt}
\SetGraphUnit{.65}
\Vertex[L={$R$}]{1} \SOEA[L={$R$}](1){4} \SOEA[NoLabel](4){6} \NOEA[NoLabel](4){2} \SOEA[L={$R$}](2){5} \NOEA[NoLabel](5){3}

\SetUpEdge[style={->}]
\Edges(1,4,6) \Edges(2,4) \Edges(2,5) \Edges(3,5,6)
\end{tikzpicture}
\end{center}
\caption{$V_R$ receiver placement.}
\end{subfigure}\begin{subfigure}{.33\textwidth}
\begin{center}
\begin{tikzpicture}
\tikzstyle{every node}=[font=\footnotesize]
\renewcommand*{\VertexInterMinSize}{15pt}
\SetGraphUnit{.65}
\Vertex[NoLabel]{1} \SOEA[NoLabel](1){4} \SOEA[L={$R$}](4){6} \NOEA[L={$R$}](4){2} \SOEA[L={$R$}](2){5} \NOEA[NoLabel](5){3}

\SetUpEdge[style={->}]
\Edges(1,4,6) \Edges(2,4) \Edges(2,5) \Edges(3,5,6)
\end{tikzpicture}
\end{center}
\caption{$V_{\rho(R)}$ receiver placement.}
\end{subfigure}\begin{subfigure}{.33\textwidth}
\begin{center}
\begin{tikzpicture}
\tikzstyle{every node}=[font=\footnotesize]
\renewcommand*{\VertexInterMinSize}{15pt}
\SetGraphUnit{.65}
\Vertex[NoLabel]{1} \SOEA[L={$R$}](1){4} \SOEA[NoLabel](4){6} \NOEA[NoLabel](4){2} \SOEA[L={$R$}](2){5} \NOEA[L={$R$}](5){3}

\SetUpEdge[style={->}]
\Edges(1,4,6) \Edges(2,4) \Edges(2,5) \Edges(3,5,6)
\end{tikzpicture}
\end{center}
\caption{$V_{\sigma(R)}$ receiver placement.}
\end{subfigure}
\caption{Receiver placements of the 3-semilattice.}
\label{f:rot/ref}
\end{figure}
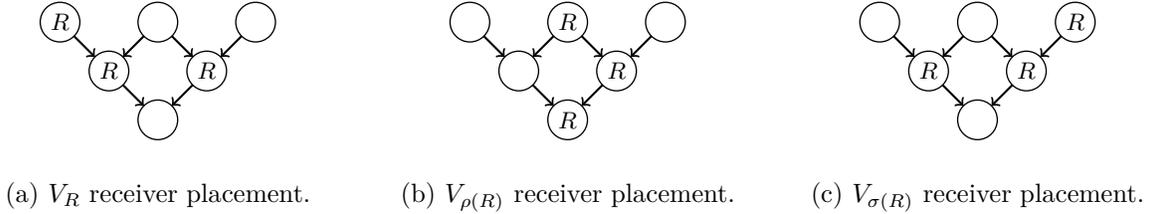

\begin{thrm}
  Let $V_R$ be a receiver placement for a triangular semilattice network
  \net{n}. The following hold:
  \begin{enumerate}
  \item $V_{\rho(R)}$ and $V_{\sigma(R)}$ are valid if and only if $V_R$
    is valid.
  \item If \net{n} is equipped with the side receivers, there exists
    an $\F_q$-labeling for \net{n} with valid receiver placements
    $V_{\rho(R)}$ or $V_{\sigma(R)}$ if and only if there exists an
    $\F_q$-labeling for \net{n} with the valid receiver placement $V_R$.
  \end{enumerate}
\end{thrm}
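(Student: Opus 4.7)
The plan is to address the three sub-claims separately. For item 1 (validity under $\rho$ and $\sigma$), I will invoke Theorem~\ref{thm:valid} (validity iff distributed) and show that the distributed property is preserved by both symmetries. Passing to barycentric coordinates $(a,b,c) = (x, y, n-1-x-y)$ with $a+b+c = n-1$, the symmetries act as $\rho(a,b,c) = (c,a,b)$ and $\sigma(a,b,c) = (b,a,c)$, namely as permutations of the three coordinates. Every $k$-triangle of \net{n} can be written as $\{(a,b,c) : a \geq a_0,\ b \geq b_0,\ c \geq c_0\}$ for some $(a_0,b_0,c_0)$ with $a_0+b_0+c_0 = n-k$, and this family is manifestly stable under coordinate permutations; hence $\rho$ and $\sigma$ permute $k$-triangles bijectively. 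The number of labels of $V_R$ in a $k$-triangle $T$ therefore equals the number of labels of $V_{\rho(R)}$ in $\rho(T)$ (and likewise for $\sigma$), so the distributed property---and hence validity---is preserved in both directions by $\rho^3 = \mathrm{id}$ and $\sigma^2 = \mathrm{id}$.

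For item 2 under $\sigma$, the key observation is that $\sigma$ is a directed-graph automorphism of \net{n}: applying $\sigma$ to the edges $((x+1,y),(x,y))$ and $((x,y+1),(x,y))$ yields $((y,x+1),(y,x))$ and $((y+1,x),(y,x))$, which are themselves edges of \net{n}. Given an $\F_q$-labeling $\{N_v\}$ valid for the left-side, right-side, and $R$ receivers, I define $N'_v := N_{\sigma(v)}$. Then the sources receive a permutation of the standard basis; the coding-point condition transfers because parents of $v$ in \net{n} correspond via $\sigma$ to parents of $\sigma(v)$; the two side receivers swap, each remaining linearly independent; and $V_{\sigma(R)}$ inherits linear independence from $V_R$. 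The reverse direction follows by applying $\sigma$ again.

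The main obstacle is item 2 under $\rho$, since $\rho$ is \emph{not} a directed-graph automorphism: in barycentric coordinates, $\rho$ sends the edge-direction vector $(-1,0,1)$ to $(1,-1,0)$, which is neither a forward nor a reverse edge of \net{n}. So the transport-of-labeling trick fails. I would instead reduce to a polynomial condition: by Proposition~\ref{prop:oneterm} the side receivers contribute monomial determinants, so the existence of an $\F_q$-labeling for the left-side, right-side, and $R$ receivers is equivalent to finding nonzero values $\alpha^{(i)}_j \in \F_q$ making $\det(N_R) \neq 0$. The plan is then to show that $\det(N_R)$ and $\det(N_{\rho(R)})$ are related by a permutation of the $\alpha$-variables induced by $\rho$'s action on the underlying undirected triangular lattice (which \emph{is} $\rho$-symmetric); this should induce a bijection on the vertex-disjoint path systems appearing in the expansion of Proposition~\ref{prop:detterms}, matching monomials between the two determinants. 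Working out this bijection precisely---tracking signs and reinterpreting directed paths under a rotation that does not preserve edge orientations---is the crux of the argument.
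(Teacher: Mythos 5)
Your item 1 and the $\sigma$ half of item 2 are sound and essentially match the paper. The genuine gap is exactly where you deferred it---item 2 for $\rho$---and both your diagnosis and your fallback plan go wrong there. First, the transport-of-labeling trick does \emph{not} fail for $\rho$: it is precisely what the paper does. The paper sets $N^\rho_v := N_{\rho^{-1}(v)}$ and multiplies by an invertible $\abs{\S}\times\abs{\S}$ matrix to restore the standard basis at the sources. Since $\rho$ cyclically permutes the three placements $\{V_\S, V_\ell, V_r\}$ (Equation~\eqref{eq:1}), the new source columns are $N_{V_r}$ up to order, hence invertible, the two side placements stay invertible (they become $N_{V_\S}$ and $N_{V_\ell}$), and $N^\rho_{V_{\rho(R)}} = N_{V_R}$ up to order. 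The idea you are missing is why the span condition survives even though $\rho$ is not a digraph automorphism: in your barycentric coordinates $\rho^{-1}$ carries the in-star $\{u_1,u_2\}\to v$ at a coding point $v=(a,b,c)$ to the triple $\{(b,c,a),(b,c-1,a+1),(b+1,c-1,a)\}$, which is again an in-star of \net{n}, but \emph{recentered} at $w=\rho^{-1}(u_1)=(b,c-1,a+1)$, whose parents are the other two vertices. The original relation $N_w = \alpha N_{\rho^{-1}(u_2)} + \beta N_{\rho^{-1}(v)}$ can be solved for $N_{\rho^{-1}(v)}$ provided $\beta\neq 0$, yielding exactly the span condition $N^\rho$ needs at $v$. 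The nonvanishing of every such local coefficient is what the side-receiver hypothesis buys: by Proposition~\ref{prop:oneterm} the two side minors multiply to $\pm\prod_{i,j}\alpha^{(i)}_j$, so any labeling accommodating both side receivers has all coefficients nonzero. This is why item 2 is stated for \net{n} \emph{equipped with the side receivers}---a hypothesis your outline never actually uses.

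Second, your fallback route rests on a false premise: the undirected graph underlying \net{n} is \emph{not} $\rho$-symmetric. The two edge directions are $(-1,0,1)$ and $(0,-1,1)$ in barycentric coordinates; $\rho$ sends the latter to the reverse of the former, but sends $(-1,0,1)$ to $(1,-1,0)$, a same-level (``consecutive'') pair that is not an edge of \net{n}. The full triangular lattice with all three line families is $\rho$-invariant, but the code graph lacks the horizontal family, so $\rho$ induces no bijection of the edge set and hence no permutation of the variables $\alpha^{(i)}_j$; your proposed monomial-matching bijection between the expansions of $\det(N_R)$ and $\det(N_{\rho(R)})$ has no map to be induced by. What the rotation actually induces (read off from the paper's $N^\rho$ after the base change) is a birational substitution on transfer coefficients, locally of the shape $\beta\mapsto\beta^{-1}$, $\alpha\mapsto-\alpha\beta^{-1}$, which is well defined only when the coefficients are nonzero---again the side receivers. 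So the crux you flagged cannot be completed along the lines you propose; the column-permutation-plus-base-change argument is the correct (and short) resolution.
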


\begin{proof}
  By Theorem~\ref{thm:valid}, The receiver placement $V_R$ is valid if the triangular semilattice
  network \net{n} with the labels in $V_R$ is distributed. It
  is evident that being distributed is a property of the labeled
  network which is preserved by rotation or reflection of the
  labels. So it holds that $V_{\rho(R)}$ and $V_{\sigma(R)}$ are valid
  if and only if $V_R$ is valid.

  Let $N$ be a $\F_q$-labeling of the triangular semilattice network
  \net{n} with side receivers.  Let $V_\S$, $V_\ell$ and $V_r$ in $V$
  refer to the placements of the
  sources, the left receiver and the right receiver respectively. It
  holds that: 
  \begin{align}
    &V_{\rho(\ell)}=V_r \textrm{ and } V_{\rho(r)}=V_\S \label{eq:1} \\
    &V_{\sigma(\ell)}=V_r \textrm{ and } V_{\sigma(\S)}=V_\S.\label{eq:2}
  \end{align}
Let $V_R$ be a valid
  receiver placement and let $N$ be a $\F_q$-labeling. Let $N_v$ denote the column of $N$ corresponding to vector $v\in V$ and
  $N_{T}$ denote the submatrix of $N$ with columns indexed by $T\subseteq V$.
  Let $N^\rho$ be the matrix defined by the relation
  $N^\rho_v:=N_{\rho^{-1}(v)}$. Up to a multiplication of an
  invertible $|\S|\times |\S|$ matrix, $N^\rho$ is a $\F_q$-labeling
  of \net{n} with side receivers and receiver placement
  $V_{\rho(R)}$. In fact, by Equation \eqref{eq:1}, $N^\rho_\S$, $N^\rho_\ell$, $N^\rho_r$ and
  $N^\rho_{V_{\rho(R)}}$ are invertible since, up to reordering of the
  columns, they correspond to matrices $N_r$, $N_\S$, $N_\ell$ and
  $N_{V_R}$. A similar reasoning works for the reflection map $\sigma$.
  
\end{proof}

\section{Complete Study of Triangular Semilattice Network up to 4 Sources}
In this section, we will demonstrate various properties relating to the receiver placements and minimum field sizes required to solve the $\F_q$-labeling conditions for the triangle semilattice network on small lengths.
\subsection{The Triangular Semilattice Networks \net{2} and \net{3}}
The 2-semilattice has three different valid receiver placements and is trivially solvable over $\mathbb{F}_2$. Note that it is the code graph for the butterfly network.

The 3-semilattice has 17 different valid receiver placements. Excluding the receiver placement corresponding to the three corner nodes, all valid receiver placements have one term in its associated minor. Therefore, any choice of receiver placements that does not include that receiver placement is solvable over $\mathbb{F}_2$ by assigning all of the variables a value of $1$. When receiver placements are chosen to include those receiver placements along the left-side, along the right-side, and corresponding to the three corner nodes, $\mathbb{F}_2$ will cause one associated minor to equal zero, so the minimum field size over which the network is solvable is $\mathbb{F}_3$ 

\subsection{The Triangular Semilattice Network \net{4}}

The triangular semilattice network \net{4} has 150 possible receiver
placements. Through exhaustion (see appendix), we know that
$\mathbb{F}_5$ is sufficient for \net{4} to be solvable when all 150
receiver placements are considered. We consider \net{4} together with the side
receivers and we find the solvability of the network by increasing its
receivers.

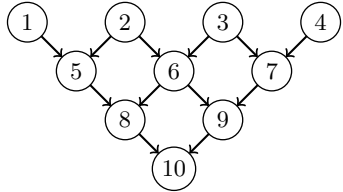
\begin{figure}[htbp]
\begin{center}
\begin{tikzpicture}[scale=.65]
\tikzstyle{every node}=[font=\footnotesize]
\renewcommand*{\VertexInterMinSize}{15pt}
\SetGraphUnit{1}
\Vertices[dir=\NOEA]{line}{10,9,7,4} \Vertices[x=-1,y=1,dir=\NOEA]{line}{8,6,3} \Vertices[x=-2,y=2,dir=\NOEA]{line}{5,2} \Vertex[x=-3,y=3]{1}
\SetUpEdge[style={->}]
\Edges(4,7,9,10) \Edges(2,6,9) \Edge(3)(7) \Edges(1,5,8,10) \Edges(2,5) \Edges(3,6,8)
\end{tikzpicture}
\end{center}
\caption{The triangular semilattice network \net{4}}
\end{figure}

\begin{prop}
The semilattice network \net{4} together with any 2 receivers is
solvable over $\F_q$ for $q \leq 3$.
\end{prop}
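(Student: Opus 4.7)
The plan is to exploit the fact that \net{4} is already equipped with the left and right side receivers, so including any two additional receivers $R_1,R_2$ gives four receiver placements in total. By Proposition~\ref{prop:oneterm} the determinants of the two side placements are, up to sign, monomials in the transfer coefficients $\alpha^{(i)}_j$, hence nonzero in any $\F_q$ as soon as every $\alpha^{(i)}_j$ is assigned a nonzero value. Thus the proposition reduces to: given any two valid placements $V_{R_1},V_{R_2}$, exhibit $\alpha^{(i)}_j\in\F_q^\ast$ that simultaneously make $\det(N_{R_1})$ and $\det(N_{R_2})$ nonzero, for some $q\le 3$.

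First I would enumerate all valid single-receiver placements using Theorem~\ref{thm:valid} (i.e.\ the distributed placements), and then collapse unordered pairs $\{V_{R_1},V_{R_2}\}$ under the reflection $\sigma$ of Definition~\ref{def:tsnrotate}, which preserves the set of side receivers. The invariance result of the previous subsection lets us transfer an $\F_q$-labeling across $\sigma$, so it suffices to work with one representative per $\sigma$-orbit. For each representative pair, use Proposition~\ref{prop:detterms} to write $\det(N_{R_i})$ as the signed sum over vertex-disjoint path systems; each such polynomial has degree at most one in every $\alpha^{(i)}_j$.

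For representatives where both placements are ``thin'' (for instance the one-per-level side placements analyzed in Proposition~\ref{prop:oneterm}, or receivers sitting in the top level), the determinants are single monomials and nonzero already over $\F_2$ by setting every $\alpha^{(i)}_j=1$. For the remaining representatives I would exhibit explicit values $\alpha^{(i)}_j\in\{1,2\}\subset\F_3^\ast$ making both determinants nonzero; this is a finite check since the number of $\sigma$-orbits of pairs in \net{4} is small (150 valid placements give only a modest number of unordered pairs up to reflection).

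The main obstacle is precisely those multi-term representatives: a signed path-system sum may cancel when one uniformly sets every variable to $1$ in $\F_2$, so we genuinely need to pass to $\F_3$. The degree-one-per-variable structure gives each determinant at most half of $\F_3^\ast$ as its zero set along any single coordinate, which is too weak for a direct union-bound argument; consequently the verification must proceed orbit by orbit, reading off the few path-system monomials of each $\det(N_{R_i})$ from Proposition~\ref{prop:detterms} and choosing variables in $\{1,2\}$ so that no total cancellation occurs in either polynomial. The smallness of \net{4} makes this explicit check tractable.
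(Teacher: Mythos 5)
Your overall reduction is sound as far as it goes: the side receivers force all transfer coefficients to be nonzero by Proposition~\ref{prop:oneterm}, the symmetry results let you cut down by $\sigma$-orbits, and a brute-force check over $\F_3$ would in principle settle the matter. But as written the proposal has a genuine gap: all of the substantive work is deferred to an ``orbit by orbit'' verification that you never carry out and whose size you underestimate. With 150 valid placements there are on the order of $10^4$ unordered pairs, still thousands after quotienting by reflection, and you exhibit no explicit assignments for any of them. You also correctly observe that a union bound over the multilinear structure is too weak, but you stop there rather than replacing it with an argument.

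The paper's proof avoids the enumeration entirely via one structural fact you are missing: by Corollary~\ref{c:terms}, every minor of \net{4} is a sum of at most \emph{three} monomials (one per vertex-disjoint path system), so the only data that matters for a pair of extra receivers is the pattern $[i,j]$ of term counts with $1\le i\le j\le 3$ --- six cases, not thousands. The patterns with both counts odd are killed over $\F_2$ by setting every variable to $1$; the patterns $[1,2]$ and $[2,2]$ are killed over $\F_3$ the same way (each two-term minor evaluates to $2$); and the single hard pattern $[2,3]$ is handled by an abstract contradiction argument over $\F_3$ that repeatedly ``swaps'' a nonzero coordinate $a\mapsto 2a$, exploits multilinearity to track which of $A+B$ and $C+D+E$ vanishes, and ultimately derives the impossible form $C+C+E$ for a minor (impossible because the monomials of a minor correspond to distinct path systems). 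None of these placement-independent reductions appear in your proposal, and without them the ``finite check'' you invoke is a computation, not a proof.
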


\begin{proof}
  First recall that having $\{1,5,8,10\}$ and $\{4,7,9,10\}$ as
  receiver placements forces every transfer coefficient of \net{4} to be nonzero
  as shown in Proposition~\ref{prop:oneterm}. Moreover, let $V_R$ be a receiver
  placement, then, from Corollary \ref{c:terms}, $\det(N_R)\in \F_q[\alpha^{(i)}_j\mid i \in
[6],\  j \in [2]]$ is a
  multivariate polynomial with at most three terms. 

  Let $[i,j]$ for $1 \leq i \leq j \leq 3$ represents the number of
  terms of the minors corresponding to the two further receivers where
  we assume $i \leq j$ without loss of generality. We are going to
  prove the theorem by working through the different
  cases. 

\begin{itemize}[leftmargin=.5cm]
\item $[1,1], [1,3], [3,3]$: The minors have all odd numbers of terms and by
  setting all variables to 1 over $\F_2$, the value of every
  minor is then 1.

\item $[1,2], [2,2]$: Set all variables to 1 over $\F_3$, the value of the 1-term minor would be 1 and the value of the 2-term minor(s) would be 2.

\item $[2,3]$: This case is not solvable over $\F_2$ since then the
  2-term minor is 0. We prove that this case is solvable over $\F_3$
  by contradiction; assume that for every evaluation point
  $\mathsf{a}=(\sfa^{(i)}_j\mid i \in [6],\ j \in [2])\in
  \F_3^{12}$ without zero entries at least one of the minors is zero.

  Since all transfer coefficients must be nonzero, without loss of
  generality we can denote the minors as $A+B$ and $C+D+E$ where $A,B$
  and $C,D,E$ are 
  terms with no common factor respectively. In the following, swapping
  a nonzero value $a\in \F_3$ corresponds to taking the value $2a\in \F_3$.
  
\begin{enumerate}[label=\arabic*)]
\item[(i)]\label{item1} Let $\mathsf{a}\in \F_3^{12}$ be such that
  $(A+B)(\sfa)=(C+D+E)(\sfa)=0$ and $\alpha^{(i)}_j$ be a variable in
  $C+D+E$. Define $\sfa'\in \F_3^{12}$ to be equal to $\sfa$ except for
  $\sfa^{(i)}_j$, which is swapped; then
  $(C+D+E)(\sfa') \neq 0$. If the same $\alpha^{(i)}_j$ appears in
  $A+B$ as well, we have $(A+B)(\sfa')\neq 0$, a contradiction. If no
  variable in $C+D+E$ appears in $A+B$, instead define $\sfa'\in\F_3^{12}$
  from $\sfa$ by swapping two values of it corresponding to some
  variable in $A+B$ and to some variable in $C+D+E$ independently to
  again get $(C+D+E)(\sfa')\neq 0, (A+B)(\sfa') \neq 0$, a
  contradiction again.

\item[(ii)] Let instead for all $\sfa\in \F_3^{12}$ exactly one of
  $(A+B)(\sfa)$ and $(C+D+E)(\sfa)$ is 0.

  Note that all variables in $A+B$ must appear in $C+D+E$; assume for
  the sake of contradiction that there is some variable
  $\alpha_j^{(i)}$ which appears in $A+B$ which does not appear in
  $C+D+E$. Then, if $(A+B)(\sfa)=0$ and $(C+D+E)(\sfa)\neq0$, the
  evaluation point $\sfa'\in\F_3^{12}$ defined from $\sfa$ by swapping
  the value of $\sfa_j^{(i)}$ produces $(A+B)(\sfa')\neq0$ and
  $(C+D+E)(\sfa')\neq0$, a contradiction. If
  $(C+D+E)(\sfa)=0,(A+B)(\sfa)\neq0$, then taking $\sfa'\in\F_3^{12}$
  defined from $\sfa$ by swapping the value of $\sfa_j^{(i)}$ produces
  $(A+B)(\sfa')=0,(C+D+E)(\sfa')=0$, again a contradiction for 
  Item (i). This proves that all variables in $A+B$ must appear in $C+D+E$.

  \smallskip Let $\sfa\in \F_3^{12}$ be such that $(A+B)(\sfa)= 0$ and
  $(C+D+E)(\sfa)\neq 0$, then if $\sfa'$ is obtained by $\sfa$ by
  swapping one of the values corresponding to a variable contained in
  $A+B$, then $(A+B)(\sfa')\neq 0$ and $(C+D+E)(\sfa')=0$.

  Without loss of generality we can focus on the case where
  $\sfa\in \F_3^{12}$ is such that $(A+B)(\sfa)\neq 0$ and
  $(C+D+E)(\sfa)=0$. 
  \begin{itemize}
  \item Consider now the case where there exists a variable
    $\alpha_j^{(i)}$ which appears in $C+D+E$ but not in $A+B$ and
    define $\sfa'\in \F_3^{12}$ from $\sfa$ by swapping the value of
    $\sfa_j^{(i)}$. Then, $(A+B)(\sfa') \neq 0$ and
    $(C+D+E)(\sfa')\neq 0$, a contradiction.
  \item Consider instead the case where $A+B$ and $C+D+E$ share the same
    set of variables. Let $\sfa\in \F_3^{12}$ be a root of $C+D+E$.
    As each swap changes whether $A+B$ is nonzero, if $\sfa'\in \F_3^{12}$ is obtained from $\sfa$ by swapping
    the values of two distinct variables
    $\alpha_{j_1}^{(i_1)},\alpha_{j_2}^{(i_2)}$ contained in $C+D+E$, we
    get back to $(C+D+E)(\sfa')=0$.  Indeed, either the distinct
    variables appear in the same terms or they partition the
    terms. 
    Note that two such variables
    $\alpha_{j_1}^{(i_1)},\alpha_{j_2}^{(i_2)}$ partitioning the terms
    exist since we cannot have everything sharing the same terms by
    assumption. Then, we are able to partition all variables as to
    whether they share a term with $\alpha_{j_1}^{(i_1)}$ or $\alpha_{j_2}^{(i_2)}$, so we can represent our
    sum in the form of $C+C+E$. This is impossible as the minor is
    formed by a sum of the product of transfer coefficients over
    different sets of paths while the repetition of $C$ corresponds to
    a repeated set of paths.
  \end{itemize}
\qedhere
\end{enumerate}
\end{itemize}
\end{proof}

We can also characterize some sets of receivers in the 4-semilattice which require a larger field size. 

\begin{prop}
\label{prop:sum3}
There exists a choice of three receivers of the semilattice network
\net{4} which is not solvable over $\F_q$ for $q \leq 3$ but it is over $\F_4$.
\end{prop}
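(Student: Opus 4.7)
The plan is to exhibit three specific receiver placements $V_{R_1}, V_{R_2}, V_{R_3}$ among the $150$ valid placements of \net{4} and then to verify three things: no assignment of nonzero values in $\F_2$ or $\F_3$ to the twelve transfer coefficients $\alpha^{(i)}_j$ simultaneously makes all five minors $\det(N_\ell), \det(N_r), \det(N_{R_1}), \det(N_{R_2}), \det(N_{R_3})$ nonzero, while such an assignment does exist over $\F_4$. Note that by Proposition~\ref{prop:oneterm} the two side-receiver minors are monomials, and their nonvanishing is exactly the condition that every transfer coefficient is nonzero, so only the three additional minors impose genuine polynomial constraints.

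I would select the three further receivers with two design goals in mind. First, to rule out $\F_2$: at least one of the three minors should have an even number of terms, since with every $\alpha^{(i)}_j$ forced to equal $1$ (the unique nonzero element of $\F_2$), any such minor immediately vanishes. Second, to rule out $\F_3$: the triple should be chosen so that at every point of $(\F_3^*)^{12}$ at least one of the three minors vanishes. Recall from Corollary~\ref{c:terms} and the modest size of \net{4} that each minor has at most three terms. A natural candidate triple is one whose receivers are related by the rotation $\rho$ or reflection $\sigma$ of Definition~\ref{def:tsnrotate}, so that the three minors are obtained from one another by permuting the twelve variables and by the invariance theorem of the previous subsection their solvability properties are linked. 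Working through the short list of candidate triples, I would fix the placements explicitly and compute each $\det(N_{R_i})$ as a signed sum of monomials in the $\alpha^{(i)}_j$ via Proposition~\ref{prop:detterms}.

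With the three placements fixed and their determinants written out, the $\F_2$ argument reduces to a parity check on the chosen even-term minor. The $\F_3$ argument then follows the template of the proof of the preceding proposition: one argues that for every candidate assignment $\sfa \in (\F_3^*)^{12}$ at least one $\det(N_{R_i})$ is zero, using swap operations (replacing some $\sfa^{(i)}_j=1$ with $2=-1$) to produce an involutive pairing of putative common roots and to derive a contradiction, this time tracking three polynomials instead of two. For the $\F_4$ direction I would exhibit an explicit valuation using a primitive element $\zeta \in \F_4$ with $\zeta^2+\zeta+1=0$ (so $\F_4^*=\{1,\zeta,\zeta^2\}$) and verify by direct substitution that all five minors evaluate to nonzero elements of $\F_4$; this is a short finite check.

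The main obstacle is the $\F_3$ step: the single-swap pairing of the previous proof dealt with only two minors, but now three simultaneous polynomial inequalities interact in more complicated ways. I expect the cleanest route is to pick the receiver triple so that the three determinants lie in a single orbit under the symmetry group generated by $\rho$ and $\sigma$, reducing the global infeasibility statement to a statement about one polynomial and its symmetric images. If no fully symmetric triple does the job, the fallback is a finite case analysis on the overlap patterns of the variable supports of the three minors, combined with double-swap arguments extending the single-swap pairing; since $\abs{(\F_3^*)^{12}}=2^{12}$ and each minor has low degree and at most three terms, this analysis is bounded in size and can be completed by an explicit combinatorial enumeration if needed.
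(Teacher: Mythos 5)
Your outline gets the easy parts right --- the side receivers force all twelve coefficients nonzero, an even-term minor kills $\F_2$ when every coefficient is $1$, and $\F_4$ is handled by exhibiting an explicit point --- but the heart of the proposition is the $\F_3$ impossibility, and there your proposal has a genuine gap: you never identify a triple that works, and the mechanisms you propose for the $\F_3$ step do not match the structure that actually makes the paper's example succeed. The paper takes $V_{R_1}=\{2,5,7,10\}$, $V_{R_2}=\{2,4,9,10\}$, $V_{R_3}=\{1,4,5,10\}$, writes $\det(N_{R_1})=A+B$ and $\det(N_{R_2})=C+D$ (two terms each), and observes that $\det(N_{R_3})$ equals a unit monomial times $(A+B)(C+D)-AD$. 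The $\F_3$ argument is then a two-line computation with no swaps and no enumeration: if $\sfa\in(\F_3^*)^{12}$ makes $A+B$ and $C+D$ nonzero, then since the only nonzero elements of $\F_3$ are $\pm 1$, a nonzero two-term sum of nonzero monomials forces $A(\sfa)=B(\sfa)$ and $C(\sfa)=D(\sfa)$, whence $((A+B)(C+D)-AD)(\sfa)=4AD-AD=3AD=0$. This resultant-like identity among the three minors is the missing idea in your plan.

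Note also that your preferred route --- a triple lying in a single orbit of the symmetry group generated by $\rho$ and $\sigma$ --- cannot produce the paper's example, since the three minors there have $2$, $2$, and $3$ terms respectively and so cannot be permutation images of one another; and the invariance theorem links solvability of a single placement to that of its rotated image, not the simultaneous solvability of a symmetric triple, so even a genuine orbit would not obviously reduce to ``one polynomial and its images.'' Your fallback of enumerating $(\F_3^*)^{12}$, which has $2^{12}$ points, is a legitimate finite verification, but only once a correct triple is in hand, and ``working through the short list of candidate triples'' understates the search (the paper reports $324$ such triples among all $\binom{148}{3}$ choices, found by exhaustive computation). As written, the proposal defers the one step that requires an idea to an unspecified search-plus-case-analysis, so it does not yet constitute a proof.
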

\begin{proof}
  We prove that there is no evaluation point $\sfa\in \F_q^{12}$ 
  without zero entries for $q=2,3$ such that the minors related to receiver placements
  $\{2,5,7,10\},$ $\{2,4,9,10\},\{1,4,5,10\}$ are simultaneously
  nonzero.  It holds that
\begin{align*}
\det(N_{\{2, 5, 7, 10\}})&=\alpha_1^{(1)}\alpha_2^{(2)}\alpha_2^{(3)}\alpha_2^{(4)}\alpha_1^{(6)} +	\alpha_1^{(1)}\alpha_2^{(2)}\alpha_2^{(3)}\alpha_1^{(5)}\alpha_2^{(6)}=A+B\\
\det(N_{\{2, 4, 9, 10\}})&=\alpha_1^{(1)}\alpha_2^{(2)}\alpha_1^{(4)}\alpha_1^{(5)}\alpha_1^{(6)} +	\alpha_1^{(1)}\alpha_1^{(3)}\alpha_1^{(4)}\alpha_2^{(5)}\alpha_1^{(6)}
=C+D\\
\det(N_{\{1, 4, 5, 10\}})&=\alpha_2^{(1)}\alpha_2^{(2)}\alpha_2^{(4)}\alpha_1^{(6)} +	\alpha_2^{(1)}\alpha_2^{(2)}\alpha_1^{(5)}\alpha_2^{(6)} + \alpha_2^{(1)}\alpha_1^{(3)}\alpha_2^{(5)}\alpha_2^{(6)}
\\
&=\alpha_2^{(1)}\frac{(A+B)(C+D)-AD}{\left(\alpha_1^{(1)}\right)^2\alpha_2^{(2)}\alpha_2^{(3)}\alpha_1^{(4)}\alpha_1^{(5)}\alpha_{1}^{(6)}}
\end{align*}
  
It is enough at least one of the three polynomials of the
form $A+B, C+D$ and $(A+B)(C+D)-AD$ evaluate to zero.  Over $\F_2$, note that
$(A+B)(\sfa)=1+1=0$. Over $\F_3$, if either $(A+B)(\sfa)=0$ or
$(C+D)(\sfa)=0$, we are done. Otherwise, if there exists
$\sfa\in\F_3^{12}$ such that $(A+B)(\sfa)\neq 0$ and $(C+D)(\sfa) \neq 0$,
then $A(\sfa)=B(\sfa)$ and $C(\sfa)=D(\sfa)$. It follows that 
\begin{align*}
((A+B)(C+D)-AD)(\sfa) &=
  ((2A)(2D))(\sfa)-(AD)(\sfa)  =(AD-AD)(\sfa)=0\label{eq:4}.
\end{align*}

A solution over $\F_4 = \bigslant{\F_2}{(a^2+a+1)}$ for \net{4} with
receiver placements $\{2,5,7,10\},$ $\{2,4,9,10\}$ and  $\{1,4,5,10\}$ is  
\[\sfa=(1,a+1,a+1,a+1,a,a,a+1,a+1,a,1,a,a)\in \F_4^{12}.\]

\qedhere
\end{proof}

By exhaustive search (see appendix), there exist 324 choices of 3
receiver placements (fixing the sides) that require a minimum field
size of $\mathbb{F}_4$ to be solved. Also through exhaustive
search, we know that any selection of up to 5 receiver
placements is solvable over $\mathbb{F}_4$ or a smaller finite
field.



\begin{prop}
  There exists a choice of six receivers of the semilattice network
  \net{4} which is not solvable over $\F_q$ for $q \leq 4$ but it is
  over $\F_5$.
\end{prop}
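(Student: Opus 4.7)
The plan is to follow the template of Proposition~\ref{prop:sum3}: exhibit six explicit receiver placements, argue algebraically that their determinant polynomials cannot all be simultaneously nonzero over any field of size at most~$4$ on evaluations with nonzero coordinates, and then display a witness labeling over $\F_5$. As in Proposition~\ref{prop:sum3}, the two side receivers together with Proposition~\ref{prop:oneterm} force every transfer coefficient to take a nonzero value, so only evaluations $\sfa\in(\F_q^*)^{12}$ need to be considered.

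First I would identify, via the exhaustive search the authors invoke in the appendix, six valid placements $V_{R_1},\dots,V_{R_6}$ whose minor polynomials $f_1,\dots,f_6\in \F_q[\alpha_j^{(i)}\mid i\in[6],\ j\in[2]]$ share enough monomials that their common nonvanishing locus on $(\F_q^*)^{12}$ is empty for $q=2,3,4$. By Corollary~\ref{c:terms} each $f_i$ is multilinear with at most three terms, so once the six placements are fixed the solvability question becomes a finite combinatorial problem in the monomials of the $f_i$.

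Second I would dispatch the three small fields in turn. Over $\F_2$ the only candidate evaluation is the all-ones point, so any $f_i$ with an even number of terms vanishes there; it suffices to choose the six placements so that at least one $f_i$ has two terms. Over $\F_3$ the argument is modelled on the identity $(A+B)(C+D)-AD$ appearing in Proposition~\ref{prop:sum3}: a two-term minor $A+B$ is nonzero only when $A(\sfa)=B(\sfa)$, while a three-term minor $A+B+C$ is nonzero only when exactly one or all three of the evaluated terms take a prescribed value; chaining these equalities across the six minors, exactly as in the $\F_3$ case of Proposition~\ref{prop:sum3}, produces a polynomial identity of the form $\sum c_i f_i=0$ that forces some $f_i(\sfa)=0$. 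Over $\F_4$ the key remark is that in characteristic two a two-term minor vanishes iff its two monomials evaluate to the same element, and a three-term minor with all terms nonzero vanishes iff those three evaluations realise the three distinct nonzero elements of $\F_4$, since $\{1,a,a+1\}$ are exactly the nonzero elements and $1+a+(a+1)=0$. Combining these rigid combinatorial constraints across the six chosen minors and invoking the ``no repeated monomial'' observation (distinct sets of vertex-disjoint paths give distinct monomials, as used at the end of Proposition~\ref{prop:sum3}) yields a contradiction.

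Finally I would exhibit an explicit $\sfa\in(\F_5^*)^{12}$ for which $f_i(\sfa)\neq 0$ for all $i$, in direct analogy with the closing witness of Proposition~\ref{prop:sum3}. The principal obstacle is the $\F_4$ step: one must choose the six receiver placements so that the equalities forced by two-term minors and the permutation-of-$\{1,a,a+1\}$ constraints forced by three-term minors are globally unsatisfiable, which is the delicate combinatorial content of the proof and is presumably precisely what the exhaustive search in the appendix certifies.
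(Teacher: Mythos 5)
Your outline correctly identifies the template (mirror Proposition~\ref{prop:sum3}: pick six placements, rule out $q\le 4$ on evaluations with nonzero entries, exhibit an $\F_5$ witness), and your individual observations about $\F_2$ (an even-term minor dies at the all-ones point) and about $\F_4$ (three nonzero elements sum to zero iff they are pairwise distinct) are both correct. But there is a genuine gap: you never produce the six receiver placements, never establish any algebraic relationship among their minors, and explicitly defer the decisive $\F_4$ unsatisfiability to ``presumably\dots what the exhaustive search in the appendix certifies.'' That deferral concedes the entire content of the proposition. The paper does not rely on a computer search for this step; it closes the argument with a short structural computation. Specifically, it takes the placements $\{1,2,4,9\}$, $\{1,3,4,8\}$, $\{2,5,7,10\}$, $\{1,4,8,9\}$, $\{1,4,5,10\}$, $\{1,3,4,10\}$, writes the three two-term minors as $A+B$, $C+D$, $E+F$, and observes that the three three-term minors are, up to nonzero monomial factors, exactly $(A+B)(C+D)-AD$, $(A+B)(E+F)-BE$, and $(C+D)(E+F)-CF$. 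This pairwise-coupled structure is the key idea missing from your proposal.

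With that structure in hand the $\F_4$ case is a short pigeonhole, not a delicate combinatorial search: normalizing $(A+B)(\sfa)=A(\sfa)(1+b')$, $(C+D)(\sfa)=C(\sfa)(1+d')$, $(E+F)(\sfa)=E(\sfa)(1+f')$ with $b',d',f'\in\F_4^*$, the three coupled minors become $1+b'+b'd'$, $1+f'+b'f'$, $1+d'+d'f'$ times nonzero monomials. If any of $1+b',1+d',1+f'$ vanishes we are done; otherwise $b',d',f'\in\{a,a+1\}$, two of them coincide by the Pigeonhole Principle, and for $b'=d'$ one gets $1+b'+b'd'=1+b'+(b')^2=0$ since every element of $\F_4\setminus\F_2$ satisfies $x^2+x+1=0$. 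Your stated $\F_4$ criterion (three distinct nonzero values summing to zero) is true but not by itself sufficient to derive a contradiction without this explicit coupling. Likewise your $\F_3$ step gestures at ``a polynomial identity of the form $\sum c_i f_i=0$,'' whereas the actual argument is simply that Proposition~\ref{prop:sum3} applies verbatim to the sub-collection $A+B$, $C+D$, $(A+B)(C+D)-AD$. To repair the proposal you must exhibit the placements and the factorization identities; everything else then follows essentially as you describe.
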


\begin{proof}

We prove that there is no evaluation point $\sfa\in \F_q^{12}$ 
  without zero entries for $q\leq 4$ such that the minors related to
  receiver placements \[\{1,2,4,9\},\{1,3,4,8\},\{2,5,7,10\},\{1,4,8,9\},
\{1,4,5,10\},\{1,3,4,10\}\] are simultaneously nonzero. It holds that 
\begin{align*}
\det(N_{\{1, 2, 4, 9\}})&= \alpha_2^{(2)}\alpha_1^{(5)} +\alpha_1^{(3)}\alpha_2^{(5)} = A + B\\
\det(N_{\{1, 3, 4, 8\}})&= \alpha_2^{(1)}\alpha_1^{(4)} + \alpha_1^{(2)}\alpha_2^{(4)} = C + D\\
\det(N_{\{2, 5, 7, 10\}})&= \alpha_1^{(1)}\alpha_2^{(2)}\alpha_2^{(3)}\alpha_2^{(4)}\alpha_1^{(6)} + \alpha_1^{(1)}\alpha_2^{(2)}\alpha_2^{(3)}\alpha_1^{(5)}\alpha_2^{(6)} = E+F\\
\det(N_{\{1, 4, 8, 9\}})&= \alpha_2^{(1)}\alpha_2^{(2)}\alpha_1^{(4)}\alpha_1^{(5)} +\alpha_2^{(1)}\alpha_1^{(3)}\alpha_1^{(4)}\alpha_2^{(5)} + \alpha_1^{(2)}\alpha_1^{(3)}\alpha_2^{(4)}\alpha_2^{(5)} = (A+B)(C+D)-AD\\
\det(N_{\{1, 4, 5, 10\}})&= \alpha_2^{(1)}\alpha_2^{(2)}\alpha_2^{(4)}\alpha_1^{(6)} + \alpha_2^{(1)}\alpha_2^{(2)}\alpha_1^{(5)}\alpha_2^{(6)} + \alpha_2^{(1)}\alpha_1^{(3)}\alpha_2^{(5)}\alpha_2^{(6)}= \alpha_2^{(1)}\tfrac{(A+B)(E+F)-BE}{\alpha_1^{(1)}\alpha_2^{(2)}\alpha_2^{(3)}\alpha_1^{(5)}}\\
\det(N_{\{1, 3, 4, 10\}})&=\alpha_2^{(1)}\alpha_1^{(4)}\alpha_1^{(6)} + \alpha_1^{(2)}\alpha_2^{(4)}\alpha_1^{(6)} +\alpha_1^{(2)}\alpha_1^{(5)}\alpha_2^{(6)} = \tfrac{(C+D)(E+F)-CF}{\alpha_1^{(1)}\alpha_2^{(2)}\alpha_2^{(3)}\alpha_2^{(4)}}
\end{align*}

The cases of $q=2,3$ follow from Proposition~\ref{prop:sum3} by just
considering $A+B,C+D,(A+B)(C+D)-AD$. Let $\F_4 =
\bigslant{\F_2}{(a^2+a+1)}$ and $\sfa\in \F_4^{12}$ be such that
$(A+B)(\sfa)\neq 0$, $(C+D)(\sfa)\neq 0$ and $(D+E)(\sfa)\neq
0$. Since $\sfa$ is not a zero of $A,C,E$, we can normalize the sums 
\[(A+B)(\sfa) = A(\sfa)(1+b'), (C+D)(\sfa)=C(\sfa)(1+d'), (E+F)(\sfa)=E(\sfa)(1+f')\]
where $b',d',f'\in \F_4^*$. It also holds that
\[\frac{((A+B)(C+D)-AD)(\sfa)}{(AC)(\sfa)}=(1+b')(1+d')-d'=1+b'+b'd'\]
\[\frac{((A+B)(E+F)-BE)(\sfa)}{(AE)(\sfa)}=(1+b')(1+f')-b'=1+f'+b'f'\]
\[\frac{((C+D)(E+F)-CF)(\sfa)}{(CE)(\sfa)}=(1+d')(1+f')-f'=1+d'+d'f'\]
 If any of $1+b', 1+d', 1+f'$ are 0, then we are done. Otherwise if
 all of $1+b', 1+d', 1+f'$ are nonzero, then $b',d',f'\in\{a,a+1\}$,
 and by the Pigeonhole Principle we have that two of them are
 equal. Without loss of generality, let $b' = d'$, then note that
 $1+b'+b'd' = 1+b'+(b')^2=0$ by the field equation, which implies that $((A+B)(C+D)-AD)(\sfa)=0$.

A solution over  $\F_5$ for \net{4} with
receiver placements $\{1,2,4,9\}$,
$\{1,3,4,8\}$, $\{2,5,7,10\}$, $\{1,4,8,9\}$,
$\{1,4,5,10\}$ and $\{1,3,4,10\}$ is  
\[\sfa=(1, 4, 3, 1, 1, 4, 4, 1, 4, 3, 3, 2)\in \F_5^{12}.\]
\end{proof}

 We have also found that there exist 8748 choices of 6 receiver
placements that are solvable over minimum field size of $\mathbb{F}_5$.

\subparagraph*{Valid Receiver Placements and Field Sizes' Implementations}

Valid receiver placements for triangular semilattice networks \net{n}
for $n$ up to 9 were calculated based on Theorem
\ref{thm:valid} using Python and SML (see Table \ref{table:valid}).

\begin{table}[htbp]
\centering
\begin{tabular}{c|c|c|c}
Length & Valid     & Invalid   & Total     \\ \hline
1      & 1         & 0         & 1         \\
2      & 3         & 0         & 3         \\
3      & 17        & 3         & 20        \\
4      & 150       & 60        & 210       \\
5      & 1848      & 1155      & 3003      \\
6      & 29636     & 24628     & 54264     \\
7      & 589362    & 594678    & 1184040   \\
8      & 14032452  & 16227888  & 30260340  \\
9      & 389622192 & 496540943 & 886163135 \\
\end{tabular}
\caption{Number of Valid Receiver Placements}\label{table:valid}
\end{table}

To calculate whether a set of receiver placements is solvable in a
given field size, we first calculate the minors corresponding to the
receiver placements and multiply them together to get a polynomial
$f$. As in the proof of the Linear Network Coding Theorem in
\cite{Ksc11}, we have a nonzero solution for all of these minors if
and only if $f$ has a nonzero root. This is also true if and only if
the remainder of $f$ modulo $(x_i^q-x_i \mid i \in [n])$ in $\F_q$ is nonzero
\cite[Proposition 2]{GMT08}.  The largest possible minimum field size
required for any set of receiver placements for \net{4} and \net{5} as
been computed implementing this method on MAGMA \cite{magma}.

\section{Acknowledgments}

The authors are grateful to Clemson University for hosting the
REU at which this work was completed. The REU was made possible by an
NSF Research Training Group (RTG) grant (DMS \#1547399) promoting
Coding Theory, Cryptography, and Number Theory at Clemson.

\end{document}